\theoremstyle{plain}
\newtheorem*{theorem}{Theorem}
\numberwithin{equation}{section}
\title{Simple Equations methodology (SEsM)  for searching of multisolitons and other exact solutions of nonlinear partial differential equations}
\author{Nikolay K. Vitanov$^{1,2}$, Zlatinka I. Dimitrova$^{3}$}
\date{$^1$ Institute of Mechanics, Bulgarian Academy of Sciences, Acad. G. Bonchev Str., Bl. 4, 1113, Sofia, Bulgaria\\
$^2$ Max-Planck Institute for the Physics of Complex Systems, Noethnitzer Str. 38, 01187 Dresden, Germany \\
$^3$ "Georgi Nadjakov" Institute of Solid State Physics, Bulgarian Academy of Sciences, Tsarigrasko Chaussee 72 Blvd., 1784 Sofia, Bulgaria}
\begin{document}

\maketitle

\begin{abstract}
We discuss a version the methodology for obtaining exact solutions of nonlinear partial differential equations based on  the  possibility for use of: (i) more than one simplest equation;
(ii) relationship that contains as particular 
cases the relationship used by Hirota \cite{hirota} and the relationship used in the
previous version of the methodology; (iii) transformation of the solution that contains as
particular case the possibility of use of the Painleve expansion; (iv) more than one balance equation.
The discussed version of the methodology allows: (i)
obtaining multi-soliton solutions of nonlinear partial differential
equations if such solutions do exist; (ii) obtaining particular solutions of
nonintegrable nonlinear partial differential equations. Several examples for the 
application of the methodology are discussed. Special attention is devoted to the 
use of the simplest equation $f_\xi =n[f^{(n-1)/n} - f^{(n+1)/n}]$ where $n$ is a
positive real number. This simplest equation allows us to obtain exact solutions
of nonlinear partial differential equations containing fractional powers.

\end{abstract}


\section{Introduction}
Differential equations relate quantities to their changes and such relationships are frequently 
encountered. Because of this differential equations occur in the mathematical analysis of many  
problems from natural and social sciences. Models based on nonlinear differential equations are 
used for description of processes in  fluid mechanics, solid state physics, plasma physics, atmospheric and ocean sciences, chemistry, materials science, mathematical biology, economics, social dynamics, etc. \cite{debn} - \cite{vsd2}. Numerical solution of the model nonlinear differential equations is always
an option but the search for exact analytical solutions of these equations is also important. 
Often the model equations are nonlinear partial differential equations (nonlinear PDEs) and by means of 
the exact solutions of these equations one can understand complex nonlinear phenomena such as existence 
and change of different regimes of functioning of complex systems, spatial localization, transfer 
processes, etc. In addition the exact solutions can be used to test computer programs for numerical 
simulations by comparison of the obtained numerical results
with the corresponding exact solutions. Because of all above the exact solutions of nonlinear partial differential equations are studied very intensively \cite{ablowitz1} - \cite{tabor}. In the course of
the time the methodology of obtaining exact solutions of nonlinear partial differential equations became  more and more complicated. In the yearly years of the research on the methodology one has searched
for transformations that can transform the solved nonlinear partial differential equation to a
linear differential equation. For an example the Hopf-Cole transformation \cite{hopf}, \cite{cole}  transforms the nonlinear Burgers equation to the linear heat equation. In the following years numerous 
attempts for obtaining such transformations have been made and in 1967 Gardner, Green, Kruskal and Miura 
\cite{gardner} managed to connect the Korteweg - de Vries equation to the inverse scattering problem for the linear Sch{\"o}dinger equation. This methodology was further developed and today it is a powerful methodology know as \emph{Method of Inverse Scattering Transform} \cite{ablowitz2}, \cite{ac}. Another line of research on
the exact solutions of nonlinear partial differential equations was followed by Hirota who developed a
direct method for obtaining of such exact solutions - \emph{Hirota method} \cite{hirota}, \cite{hirota1}.
Hirota method is based on bilinearization of the solved nonlinear partial differential equation by means of appropriate transformations. Truncated Painleve expansions may lead to many of these appropriate transformations \cite{tabor}, \cite{ct1} - \cite{wtk}.
\par 
An important development happened when Kudryashov \cite{k3} studied the possibility
for obtaining exact solutions of nonlinear partial differential equations on the basis of
a truncated Painleve expansion where the truncation happens after the "constant term" (i.e.,
the constant term is kept in the expansion). In the course of these studies Kudryashov mentioned that
a particular exact solution of certain nonlinear partial differential equations  can be
represented as power series of solutions of the Riccati equation or of equations for the 
elliptic functions of Jacobi and Weierstrass.  Further research of Kudryashov in this direction 
leaded to the formulation of the \emph{Method of Simplest Equation (MSE)} \cite{k05}. The method is based on determining of singularity order $n$ of the solved nonlinear partial differential equation.
Then a particular solution of the equation is searched as series containing powers of solutions
of a simpler equation called \emph{simplest equation}. In 2008 Kudryashov and Loguinova \cite{kl08}
extended the methodology to solutions of nonlinear ordinary differential equations constructed as 
power series  of solution of a simplest ordinary differential equation and applied this methodology
for obtaining traveling wave solutions of nonlinear partial differential equations. We refer to the 
articles of Kudryashov and co-authors for further results connected to \emph{MSE} \cite{k5a} - \cite{k15}.
\par 
Our contribution to the method of simplest equation began by the use of the ordinary differential 
equation of Bernoulli as simplest equation \cite{v10} and by application of the method to ecology
and population dynamics \cite{vd10} where we have used the concept of the balance equation.
Today the method of simplest equation has two versions:
\begin{description}
	\item[1.)]
	Version called \emph{Method of Simplest Equation - MSE} (the original version of the method by 
	Kudryashov) where the determination of the truncation  of the corresponding series of solutions of the 
	simplest equation is based on the first step in the algorithm for detection of the Painleve property.
	\item[2.)] 
	An equivalent version called \emph{Modified Method of Simplest Equation - MMSE} or \emph{Modified 
		Simple Equation Method - MSEM} \cite{kl08}, \cite{vdk}, \cite{v11} based on determination of the kind
	of the simplest equation and truncation of the series of solutions of the simplest equation by means 
	of application of a balance equation. Up to now our contributions to the methodology and its application
	are connected to this version of the method \cite{v11a} - \cite{vdv17}. We note especially the article \cite{vdv15} where we have extended the methodology of the \emph{MMSE} to simplest equations of the class
	\begin{equation}\label{sf}
	\left (\frac{d^k g}{d\xi^k} \right)^l = \sum \limits_{j=0}^{m} d_j g^j
	\end{equation}
	where $k=1,\dots$, $l =1,\dots$, and $m$ and $d_j$ are parameters. The solution of Eq.(\ref{sf}) defines
	a special function that contains as particular cases, e.g.,: (i) trigonometric functions; (ii) hyperbolic functions;
	(iii) elliptic functions of Jacobi; (iv) elliptic function of Weierstrass.
\end{description}
\par
We believe in the large potential of \emph{MSE} and \emph{MMSE} and our goal is to extend this 
methodology in order to make it applicable to larger classes of nonlinear partial differential equations. The text below is organized as follows. In Sect.2 we formulate a new version the methodology \cite{yy1} - \cite{yyx}. This version
makes the methodology  capable to obtain multi-soliton solutions of nonlinear partial differential equations. Sect. 3 is devoted to
a demonstration that the version of the method formulated in Sect. 2 really can lead to multi-soliton solutions and to solutions of nonitegrable partial differential equations on the basis of use of
truncated Painleve expansions. In Sect. 4 we demonstrate application of the method for the case without
use of truncated Painleve expansion. In Sect. 5 we show the application of the method for the case
of selected simplest equation containing non-integer powers of the unknown function. Several concluding
remarks are summarized in Sect. 6.
\section{Formulation of the Simple Equations Method (SEsM)}
In the previous version of the modified method of simplest equation one has used a representation of the searched solution of a nonlinear partial differential equation as power series of a solution of a 
simplest equation. This approach was proved to work if one search, e.g., 
for solitary wave solutions or
kink solutions of the solved nonlinear partial differential equation but not in the case of
search for bisoliton, trisoliton, and multisoliton solutions. The reason for this is that
the current version of the modified method of simplest equation  is connected to the use of a single
simplest equation. If we allow for use of more than one simplest equation then the modified method of simplest equation can be formulated in a way that makes obtaining of multisoliton solutions possible. Below we formulate such a version of the methodology. The schema of
the old and the new version of the methodology is shown in Fig. 1.
\begin{figure}[!htb]
	\centering
	\includegraphics[scale=0.5]{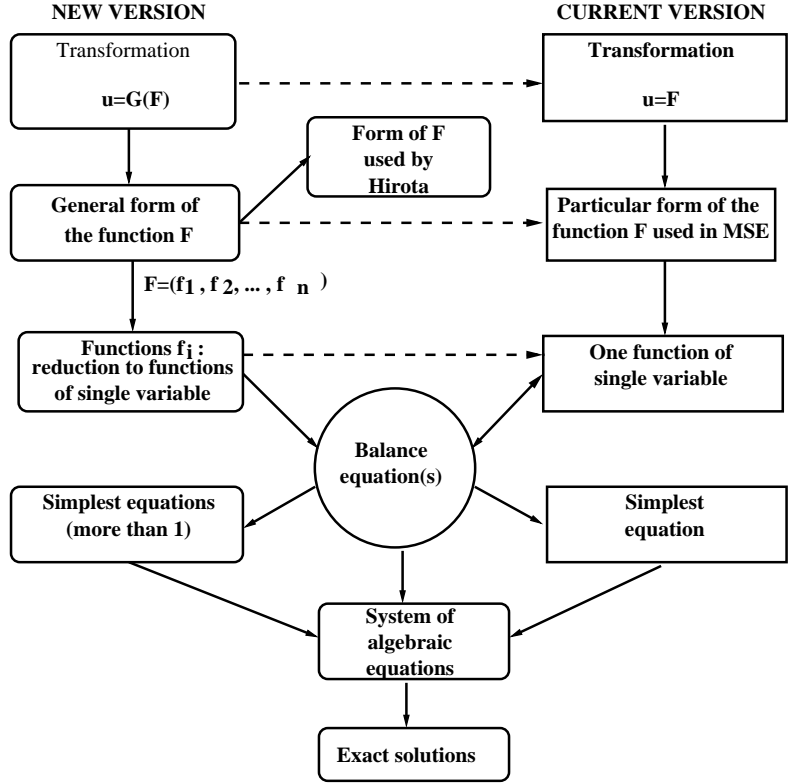}
	\caption{Schema of the current version (used up to now) and of the
		new version of the methodology. The transformation used in the new version of the method contains as particular case the transformation used in the current version of
		the method. The form of the function $F$ used in the new version
		of the method contains as particular case the form of the function
		used in the current version of the method. The new form of the function $F$ contains as particular case also the form of the function used by Hirota. In the new form of the method more than
		one simplest equation can be used and more than one balance equation can arise. In the current version of the method one uses one simplest equation and one balance equation.}
\end{figure}
\par
Let us consider a nonlinear partial differential equation 
\begin{equation}\label{eq}
{\cal{N}}(u,\dots)=0
\end{equation}
where ${\cal{N}}(u,\dots)$ depends on the function $u(x,...,t)$
and some of its derivatives participate in  ($u$ can be a function of more than 1 spatial coordinate).
The 7 steps of the methodology of the modified method of simplest equation are as follows.
\begin{description}
	\item[1.)]
	We perform a transformation
	\begin{equation}\label{m1}
	u(x,\dots,t)=G(F(x,\dots,t))
	\end{equation}
	where $G(F)$ is some function of another function  $F$. In general
	$F(x,\dots,t)$ is a function of the spatial variables as well as of the time. The transformation $G(F)$
	may be the Painleve expansion \cite{hirota}, \cite{kudr90}, \cite{k3}, \cite{w1} - \cite{k10} or another transformation, e.g., $u(x,t)=4 \tan^{-1}[F(x,t)]$ for the case of the 
	sine - Gordon equation or $u(x,t) = 4 \tanh^{-1}[F(x,t)]$ for the case of sh-Gordon (Poisson-Boltzmann 
	equation) (for applications of the last two transformations see, e.g. \cite{mv1} - \cite{mv5}).
	In many particular cases one may skip this step (then we have just $u(x,\dots,t)=F(x,\dots,t)$) 
	but in numerous cases the step is necessary
	for obtaining a solution of the studied nonlinear PDE. The application of Eq.(\ref{m1}) to 
	Eq.(\ref{eq}) leads to a nonlinear PDE for the function $F(x,\dots,t)$.
	\item[2.)]
	The function $F(x,\dots,t)$ is represented as a function of other functions $f_1,\dots,f_N$
	that are  connected to solutions of some differential equations (these equations can be partial 
	or ordinary differential equations) that are more simple than Eq.(\ref{eq}). We note that 
	the possible values of $N$ are $N=1,2,\dots$ (there may be infinite number of functions $f$ too).
	The forms of the function $F(f_1,\dots,f_N)$ can be different. 
	One example is
	\begin{eqnarray}\label{m2}
	F &=& \alpha + \sum \limits_{i_1=1}^N \beta_{i_1} f_{i_1} + \sum \limits_{i_1=1}^N  \sum \limits_{i_2=1}^N 
	\gamma_{i_1,i_2} f_{i_1} f_{i_2} + \dots + \nonumber \\
	&&\sum \limits_{i_1=1}^N \dots \sum \limits_{i_N=1}^N \sigma_{i_1,\dots,i_N} f_{i_1} \dots f_{i_N}
	\end{eqnarray}
	We shall use Eq.(\ref{m2}) below. Note that the relationship (\ref{m2}) contains as particular case the 
	relationship used by Hirota \cite{hirota}. The power series $\sum \limits_{i=0}^N \mu_n f^n$ (where
	$\mu$ is a parameter) used in the previous versions of the methodology of the modified method of simplest equation are a particular case of the relationship (\ref{m2}) too.
	\item[3.)] 
	In general the functions $f_1,\dots,f_N$ are solutions of partial differential equations.
	By means of appropriate ans{\"a}tze (e.g.,  traveling-wave ans{\"a}tze such as 
	$\xi = \hat{\alpha} x + \hat{\beta} t$; $\zeta =\hat{\gamma} x + \hat{\delta} t$, 
	$\eta = \hat{\mu} y + \hat{\nu}t \dots$) 
	the solved   differential equations for $f_1,\dots,f_N$ may be reduced to   differential equations 
	$E_l$, containing derivatives of one or several functions
	\begin{equation}\label{i1}
	E_l \left[ a(\xi), a_{\xi},a_{\xi \xi},\dots, b(\zeta), b_\zeta, b_{\zeta \zeta}, \dots \right] = 0; \ \
	l=1,\dots,N
	\end{equation}
	In many cases (e.g, if the equations for the functions $f_1,\dots$ are ordinary differential equations) one may skip this step 
	but the step may be necessary if the equations for $f_1,\dots$ are partial differential equations.
	\item[4.)]
	We assume that	the functions $a(\xi)$, $b(\zeta)$, etc.,  are  functions of 
	other functions, e.g., $v(\xi)$, $w(\zeta)$, etc., i.e.
	\begin{equation}\label{i1x}
	a(\xi) = A[v(\xi)]; \ \ b(\zeta) = B[w(\zeta)]; \dots
	\end{equation} 
	Note that the kinds of the functions $A$ , $B$, $\dots$ are not prescribed. 
	Often one uses a finite-series relationship, e.g., 
	\begin{equation}\label{i2}
	a(\xi) = \sum_{\mu_1=-\nu_1}^{\nu_2} q_{\mu_1} [v (\xi)]^{\mu_1}; \ \ \ 
	b(\zeta) = \sum_{\mu_2=-\nu_3}^{\nu_4} r_{\mu_2} [w (\zeta)]^{\mu_2}, \dots 
	\end{equation}
	where $q_{\mu_1}$, $r_{\mu_2}$, $\dots$ are coefficients.
	However other kinds of relationships may be used too. 
	\item[5.)]
	The functions  $v(\xi)$, $w(\zeta)$, $\dots$ 
	are solutions of simpler ordinary differential equations called \emph{simplest equations}. 
	For several years the methodology of the modified method of simplest equation was based 
	on use of one simplest equation. This version of the methodology allows for the use of more
	than one simplest equation. But these equation could be not the simplest possible ones. Because of this it is better to change the name of the methodology. As the equations are simple but not the simplest one we shall
	call the new version of the methodology \emph{Simple Equations Method (SEsM)}.
	\item[6.)]
	The application of the steps 1.) - 5.) to Eq.(\ref{eq}) transforms the left-hand side of 
	this equation. Let the result of this transformation  be a function that is a sum of terms where each 
	term contains some function multiplied by a coefficient. This coefficient contains some of the 
	parameters of the solved equation and some of the parameters of the solution. In the most cases
	a balance procedure must be applied in order to ensure that the above-mentioned relationships
	for the coefficients contain more than one term ( e.g., if the result of the transformation 
	is a polynomial then the balance procedure has to ensure that the coefficient of each 
	term of the polynomial is a relationship that contains at least two terms).
	This balance procedure may lead to one or more additional relationships among the parameters 
	of the solved equation and parameters of the solution. The last relationships are called 
	\emph{balance equations}. 
	\item[7.)]
	We may obtain a nontrivial solution of Eq. (\ref{eq})  if all coefficients mentioned in Step 6.) are
	set to $0$. This condition usually leads to a system of nonlinear algebraic equations for the 
	coefficients of the solved nonlinear PDE and for the coefficients of the solution. Any nontrivial 
	solution of this algebraic system leads to a solution the studied  nonlinear partial differential 
	equation. Usually the above system of algebraic equations contains many equations that have to 
	be solved with the help of   a computer algebra system. 
\end{description}
Below we shall apply the Simple Equations Methodology (SEsM)  for two different cases:
(i) in presence of transformation $u=G(F)$ - Sect. 3 or (ii) application of particular case of transformation $u=G(F)$, i.e., $u=F$ - Sect.4.
\section{Applications of SEsM for the case of presence of nontrivial transformation of kind (\ref{m1})}
\subsection{Tutorial example: The bisoliton solution of the Korteweg-de Vries equation}
This is probably the most simple possible example and its purpose is just to  show that the
new version of the methodology can be used to search for multisoliton solutions. 
\par
We consider a version of the Korteweg - de Vries equation
\begin{equation}\label{a1}
u_t + \sigma u u_x + u_{xxx}=0
\end{equation}
where $\sigma$ is a parameter. The 7 steps of the application of the version of the modified 
method of simplest equation from Sect. 2 are as follows
\begin{description}
	\item[1.)] \emph{The transformation}\\
	We set $u=p_x$ in Eq.(\ref{a1}). The result is integrated  and we apply the transformation
	$p=\frac{12}{\sigma} (\ln F)_x$. The result is
	\begin{equation}\label{a2}
	FF_{tx} + FF_{xxxx} - F_t F_x  + 3 F_{xx}^2 - 4 F_x F_{xxx} =0
	\end{equation}
	\item[2.)] \emph{Relationship among $F(x,t)$ and two functions $f_{1,2}$ that will be connected below to two simplest equations}\\ 
	We shall use two functions
	$f_1(x,t)$ and $f_2(x,t)$ and the relationship for $F$ is assumed to be a particular case of 
	Eq.(\ref{m2}) namely
	\begin{equation}\label{a3}
	F(x,t) = 1 + f_1(x,t) + f_2(x,t) + c f_1(x,t) f_2(x,t) 
	\end{equation}
	where $c$ is a parameter. We note again that Eq.(\ref{m2}) contains as particular case the
	relationship used by Hirota in \cite{hirota}. The substitution of Eq.(\ref{a3}) in Eq.(\ref{a2}) leads to
	\begin{eqnarray}\label{a4}
	&& f_{1xxxx}+f_{2xxxx}+3f_{1xx}^2+cf_{1xt} f_2 + f_{1xt} - f_{1t} f_{1x}-f_{1t} f_{2x} -
	\nonumber \\
	&&4 f_{1x} f_{1xxx} -  f_{1x} f_{2t} - 4 f_{1x} f_{2xxx} + 6 f_{1xx} f_{2xx} - 4 f_{1xxx} f_{2x} - f_{2t} f_{2x} - \nonumber \\
	&& 4 f_{2x} f_{2xxx} +  f_2 f_{1xt} + f_2 f_{1xxxx} + f_2 f_{2xt} + f_2 f_{2xxxx} + f_1 f_{1xt} + 
	\nonumber \\
	&& f_1 f_{1xxxx} + f_1 f_{2xt} +
	f_1 f_{2xxxx} + f_{2xt} + c^2 f_2^2 f_1 f_{1xt} + c^2 f_2^2 f_1 f_{1xxxx} - \nonumber \\
	&& c^2 f_2^2  f_{1t} f_{1x} - 4f_2^2 c^2 f_{1x} f_{1xxx} + c^2 f_2 f_1^2 f_{2xt} + c^2 f_2 f_1^2 f_{2xxxx} - \nonumber \\
	&& 12 c^2 f_2 f_{1x}^2 f_{2xx} - c^2 f_1^2 f_{2t} f_{2x} - 
	4 c^2 f_1^2 f_{2x} f_{2xxx} - 12 c^2 f_1 f_{1xx} f_{2x}^2 + \nonumber \\
	&& 2 c f_2 f_1  f_{1xt} + 2 c f_2 f_1  f_{1xxxx} + 2 c f_2 f_1  f_{2xt} + 2 c f_2 f_1  f_{2xxxx} - 2 c f_2  f_{1t} f_{1x} - \nonumber \\
	&& 8 c f_2  f_{1x} f_{1xxx} + 12 c f_2  f_{1xx} f_{2xx} + 12 c f_1  f_{1xx} f_{2xx} - 2 c f_1  f_{2t} f_{2x} - \nonumber \\
	&& 8 c f_1  f_{2x} f_{2xxx}+ 12 c^2 f_{1x}^2 f_{2x}^2 - 12 c f_{1x}^2 f_{2xx} - 12 c f_{1xx} f_{2x}^2 + 3 c^2 f_2^2 f_{1xx}^2 + \nonumber \\
	&& 3 c^2 f_1^2 f_{2xx}^2 + c f_2^2  f_{1xt} + c f_2^2  f_{1xxxx} + 6 c f_2 f_{1xx}^2 + c f_1^2 f_{2xt} + \nonumber \\
	&& c f_1^2 f_{2xxxx} + 6 c f_1 f_{2xx}^2 + c f_1 f_{2xt} + c f_1 f_{2xxxx} + 6 c f_{1xx} f_{2xx} + 4 c f_{1x} f_{2xxx}+ \nonumber \\
	&& c f_{1x} f_{2t} + c f_{1xxxx} f_2 + 12 c^2 f_2 f_1 f_{1xx} f_{2xx} + c f_{1t} f_{2x} + 4 c f_{1xxx} f_{2x} + 3 f_{2xx}^2 =0
	\nonumber \\
	\end{eqnarray}
	\item[3.)] \emph{Equations for the functions $f_1(x,t)$ and $f_2(x,t)$}\\
	The structure of Eq.(\ref{a4}) allow us to assume  a very simple form of the equations for the functions $f_{1,2}$:
	\begin{eqnarray}\label{a5}
	\frac{\partial f_1}{\partial x} &=& \alpha_1 f_1; \ \ \ \frac{\partial f_1}{\partial t} = \beta_1 f_1; \nonumber \\
	\frac{\partial f_2}{\partial x} &=& \alpha_2 f_2; \ \ \ \frac{\partial f_2}{\partial t} = \beta_2 f_2;
	\end{eqnarray}
	This choice will transform Eq.(\ref{a4}) to a polynomial of $f_1$ and $f_2$. Further we assume that
	$\xi = \alpha_1 x + \beta_1 t + \gamma_1$ and $\zeta = \alpha_2 x + \beta_2 t + \gamma_2$ and
	\begin{equation}\label{a6}
	f_1(x,t) = a(\xi); \ \ \ f_2(x,t) = b(\zeta)
	\end{equation}
	Above $\alpha_{1,2}$, $\beta_{1,2}$ and $\gamma_{1,2}$ are parameters.
	\item[4.)] \emph{Relationships connecting  $a(\xi)$ and $b(\zeta)$ to the functions $v(\xi)$ and
		$w(\zeta)$ that are solutions of the simplest equations} \\
	In the discussed here case the relationships are quite simple. We can use Eq.(\ref{i2}) for the cases
	$\mu_1 = \nu_2 = 1$ and $\mu_2 = \nu_4 = 1$. The result is
	\begin{equation}\label{a7}
	a(\xi) = q_1 v(\xi); \ \ \ b(\zeta) = r_1 w(\zeta)
	\end{equation}
	\item[5.)] \emph{Simplest equations for $v(\xi)$ and $w(\zeta)$}\\
	The simplest equations are
	\begin{equation}\label{a8}
	\frac{dv}{d\xi} = v; \ \ \ \frac{dw}{d\zeta} = w
	\end{equation}
	and the corresponding solutions are
	\begin{equation}\label{a9}
	v(\xi) = \omega_1 \exp (\xi); \ \ \ w(\zeta)  = \omega_2 \exp(\zeta)
	\end{equation}
	Below we shall omit the parameters $\omega_{1,2}$ as they can be included in the parameters
	$q_1$ and $r_1$ respectively. We shall omit also $q_1$ and $r_1$ as they can be included in $\xi$ and $\zeta$.
	\item[6.)] \emph{Transformation of Eq.(\ref{a4})}\\
	Let us substitute Eqs.(\ref{a5}) - (\ref{a8}) in Eq.(\ref{a4}). The result is a sum of exponential functions
	and each exponential function is multiplied by a coefficient. Each of these coefficients is a relationship
	containing the parameters of the solution and all of the relationships contain more than one term. Thus
	we don't need to perform a balance procedure.
	\item[7.)] \emph{Obtaining and solving the system of algebraic equations}\\
	The system of algebraic equations is obtained by setting of above-mentioned relationships to $0$.
	Thus we obtain the following system:
	\begin{eqnarray}\label{a10}
	&& \alpha_1^3 + \beta_1 = 0, \nonumber \\
	&& \alpha_2^3 + \beta_2 = 0, \nonumber \\
	&& (c+1) \alpha_1^4 + 4 \alpha_2 (c-1) \alpha_1^3 + 6 \alpha_2^2 (c+1) \alpha_1^2 + [(4c-4)\alpha_2^3 + (\beta_1 + \beta_2) c + \nonumber \\
	&& \beta_1 - \beta_2] \alpha_1 + [(c+1)\alpha_2^3 + (\beta_1 + \beta_2) c - \beta_1 + \beta_2] \alpha_2 = 0.
	\end{eqnarray}
	The non-trivial solution of this system is 
	\begin{eqnarray}\label{a11}
	\beta_1 = -\alpha_1^3; \ \ \beta_2 = -\alpha_2^3; \ \
	c = \frac{(\alpha_1 - \alpha_2)^2}{(\alpha_1 + \alpha_2)^2}
	\end{eqnarray}
	and the corresponding solution of Eq.(\ref{a1}) is
	\begin{eqnarray}\label{a12}
	u(x,t) &=& \frac{12}{\sigma} \frac{\partial^2}{\partial x^2} \Bigg[ 1+ \exp \Big(\alpha_1 x - \alpha_1^3 t + \gamma_1 \Big) + \exp \Big(\alpha_2 x - \alpha_2^3 t + \gamma_2 \Big) + \nonumber \\
	&& \frac{(\alpha_1 - \alpha_2)^2}{(\alpha_1 + \alpha_2)^2}
	\exp \Big( (\alpha_1 + \alpha_2)x - (\alpha_1^3 + \alpha_2^3)t + \gamma_1 + \gamma_2 \Big) \Bigg]
	\end{eqnarray}
	Eq.(\ref{a12}) describes the bisoliton solution of the Korteweg - de Vries equation.
\end{description}
\par 
By this tutorial example we have shown that the SEsM  is capable to search for
multi-soliton solutions of nonlinear PDEs. This capability is acquired on the basis of the 
possibility of use of more than one simplest equation. The relationship (\ref{m2}) 
can be used also for obtaining  exact solution of nonintegrable nonlinear PDEs. This will be 
demonstrated in the following subsection. We note that the particular case of the relationship (\ref{m2}) used by Hirota \cite{hirota} was applied also to the equations of the Korteweg - de Vries hierarchy by Kudryashov \cite{k10}. This fact again shows the potential of the version of the modified method of simplest equation discussed in this text.
\subsection{Application of SEsM for the case of presence of nontrivial transformation of kind (\ref{m1}) to an non-integrable equation - generalized Kawahara equation}
Let us discuss the equation
\begin{equation}\label{b1}
u_t + \left( \sum \limits_{k=0}^l \alpha_k u^k \right) u_x + \beta u_{xxx} + \gamma u^m u_{xxxxx}=0
\end{equation}
where $n$ and $m$ are integers and $\alpha$, $\beta$ and $\gamma$ are parameters.  Solutions of equation of this kind are discussed, e.g., by
Kudryashov \cite{k3} and by Berloff and Howard \cite{bh}
Here we shall apply the version of the modified method of simplest equation from Sect. 2 in presence of a transformation of kind (\ref{m1})
to a particular case of Eq.(\ref{b1}). In Sect. 4 we shall apply discussed version of the modified method 
of simplest equation to two equations from the family of equations (\ref{b1}) for the case
of particular case $u=F$ of the transformation (\ref{m1}).
\par
Let us consider the following particular case of Eq.(\ref{b1}): $\alpha_0 = 0$, $\alpha_1 = 2 \alpha$;
$\alpha_2 =90$, $\alpha_3 =  \dots =0$, $\gamma = -1$, $m=0$. Then from Eq.(\ref{b1}) we obtain
\begin{equation}\label{b2}
u_t + 2 \alpha u u_x + 90 u^2 u_x + \beta u_{xxx} = u_{xxxxx}.
\end{equation}
This equation is known as  a particular case ($\alpha_2 = 90$) of the generalized Kawahara equation \cite{k3}. Let us apply the version of the modified method of simplest equation from Sect.2 for the 
case of transformation obtained on the basis of a Painleve expansion truncated before the "constant term". We note that the goal in this subsection is just to illustrate the application of the discussed version of methodology to a nonitegrable equation. We don't pretend that the
obtained solution (\ref{b13}) is a new one as it is simple and may be obtained by other authors before us. Additional solutions of equations of the kind (\ref{b1}) will be obtained in Sect. 4.  
\par
The 7 steps of the application of the methodology are as follows.
\begin{description}
	\item[1.)] \emph{The transformation}\\
	We shall apply the transformation
	\begin{equation}\label{b2x}
	u(x,t) = 2 [\ln F(x,t)]_{xx}
	\end{equation}
	This transformation can be obtained from the Painleve expansion for Eq.(\ref{b2}) taking into account
	that the leading order is equal to $2$ and truncating the Painleve expansion before the "constant term"
	(i.e. using the expansion $u(x,t) = u_0(x,t)/F^2(x,t) + u_1(x,t)/F(x,t)$). 
	The result of the substitution of Eq.(\ref{b2x}) in Eq.(\ref{b2}) is
	\begin{eqnarray}\label{b3}
	&& -2 F^4 F_{xxxxxxx} + 14 F^3 F_x F_{xxxxxx} + (2 \beta F^4 + 42 F^3 F_{xx} - \nonumber \\
	&& 84 F^2 F_x^2) F_{xxxxx} + (70 F^3 F_{xxx} - 10 F F_x (\beta F^2 + 42 F F_{xx} - \nonumber \\
	&& 42 F_x^2)) F_{xxxx} - 280 F^2 F_x F_{xxx}^2+ (300 F^2 F_{xx}^2 - (- 1080 F_x^2 + \nonumber \\
	&& 4 F^2 (5 \beta-2 \alpha)) F F_{xx} + 8 F_x^2 (-120 F_x^2 + F^2 (5 \beta - \alpha))) F_{xxx} + 
	\nonumber \\
	&& 2 F_{xxt} F^4 - 900 F F_{xx}^3 F_x+ 144 F^2 F_x^3 F_{xx}^2 
	( 5\beta -  2 \alpha)  - \nonumber \\
	&& (40 (3 \beta - \alpha) F_x^3 + 2 F^2 F_t) F F_{xx} - 4 F^3 F_x F_{xt} \nonumber \\
	&& + 16 (3 \beta - \alpha) F_x^5 + 4 F^2 F_t F_x^2 = 0 \nonumber \\
	\end{eqnarray}
	\item[2.)] Relationship among $F(x,t)$ and the functions $f_k(x,t)$ \\
	Here we shall use a very simple particular case of the relationship (\ref{m2}) that contains just 
	one function $f(x,t)$:
	\begin{equation}\label{b4}
	F(x,t) = 1 + f(x,t).
	\end{equation}
	The substitution of Eq.(\ref{b4}) in Eq.(\ref{b3}) leads to
	\begin{eqnarray}\label{b5}
	&&2( \beta f_{xxxxx} + f_{xxt} - f_{xxxxxxx}) f^4 + (2(-5 \beta f_{xxxx} - 2 f_{xt} + 7 f_{xxxxxx}) f_x +
	\nonumber \\
	&&((-20 \beta + 8 \alpha) f_{xxx} + 42 f_{xxxxx} - 2 f_t) f_{xx} + 70 f_{xxx} f_{xxxx} + 8 \beta f_{xxxxx} - \nonumber \\
	&& 8 f_{xxxxxxx} + 8 f_{xxt}) f^3 + (((40 \beta - 8 \alpha) f_{xxx} - 84 f_{xxxxx} +4 f_t) f_x^2 +
	\nonumber \\
	&& ((60 \beta - 24 \alpha) f_{xx}^2 - 420 f_{xxxx} f_{xx} - 30 f_{xxxx} \beta - 280 f_{xxx}^2 + 42 f_{xxxxxx} - 
	\nonumber \\
	&& 12 f_{xt}) f_x + 300 f_{xx}^2 f_{xxx} + ((- 60 \beta + 24 \alpha) f_{xxx} + 126 f_{xxxxx} - 6 f_t) 
	f_{xx}+ 
	\nonumber \\
	&&210 f_{xxx} f_{xxxx} + 12 \beta f_{xxxxx} - 12 f_{xxxxxxx} + 12 f_{xxt}) f^2 + (((- 120 \beta+ 
	\nonumber \\
	&& 40 \alpha) f_{xx} +  420 f_{xxxx}) f_x^3 + (1080 f_{xx}  f_{xxx} + (80 \beta - 16 \alpha) f_{xxx} - 168 f_{xxxxx}+ \nonumber \\
	&& 8 f_t) f_x^2 + (-900 f_{xx}^3 + (120 \beta - 48 \alpha) f_{xx}^2 - 840 f_{xxxx}f_{xx} - 30 f_{xxxx}\beta - 560 f_{xxx}^2 + \nonumber \\
	&& 42 f_{xxxxxx} - 12 f_{xt}) f_x + 600 f_{xx}^2 f_{xxx} + ((-60 \beta + 24 \alpha) f_{xxx} + 126 f_{xxxxx} - \nonumber \\
	&& 6 f_t) f_{xx} + 210 f_{xxx}  f_{xxxx} + 8 \beta f_{xxxxx} - 8 f_{xxxxxxx} + 8 f_{xxt}) f+ (48 \beta - \nonumber \\
	&& 16 \alpha) f_x^5 - 960 f_x^4 f_{xxx} + (720 f_{xx}^2 + (-120 \beta + 40 \alpha) f_{xx} + 420 f_{xxxx}) f_x^3 + \nonumber \\
	&& (1080 f_{xx} f_{xxx} + (40 \beta - 8 \alpha) f_{xxx} - 84 f_{xxxxx} + 4 f_t) f_x^2 + (-900 f_{xx}^3 +
	\nonumber \\
	&& (60 \beta - 24 \alpha) f_{xx}^2 - 420 f_{xxxx} f_{xx} - 10 \beta f_{xxxx} - 280 f_{xxx}^2 + 14 f_{xxxxxx} - 4 f_{xt}) f_x + \nonumber \\
	&& 300 f_{xx}^2 f_{xxx} + ((-20 \beta + 8 \alpha) f_{xxx} + 42 f_{xxxxx} - 2 f_t) f_{xx} + 70 f_{xxx} f_{xxxx} + \nonumber \\
	&& 2 \beta f_{xxxxx} - 2 f_{xxxxxxx} + 2 f_{xxt} = 0
	\nonumber \\
	\end{eqnarray}
	\item[3.)] \emph{Equation for the function $f(x,t)$}\\
	The structure of Eq.(\ref{b5}) allows us to use the following equations for $f(x,t)$
	\begin{equation}\label{b6}
	\frac{\partial f}{\partial x} = k f; \ \ \ \frac{\partial f}{\partial t} = \omega f
	\end{equation}
	This choice will transform Eq.(\ref{b5}) to a polynomial of $f(x,t)$. Further we assume that
	$\xi = kx + \omega t + \sigma$ and
	\begin{equation}\label{b7}
	f(x,t) = a (\xi)
	\end{equation}
	\item[4.)] \emph{Relationship between $a(\xi)$ and a function that is solution of a simplest equation}\\
	The relationship is very simple. We can use Eq.(\ref{i2}) for the cases
	$\mu_1 = \nu_2 = 1$. The result is
	\begin{equation}\label{b8}
	a(\xi) = q v(\xi)
	\end{equation}
	where $q$ is a parameter.
	\item[5.)] \emph{The simplest equations for $v(\xi)$}\\
	The simplest equations is
	\begin{equation}\label{b9}
	\frac{dv}{d\xi} = v
	\end{equation}
	and the corresponding solutions are
	\begin{equation}\label{b10}
	v(\xi) = \theta \exp (\xi)
	\end{equation}
	Below we shall omit the parameter $\theta$ as it can be included in the parameter
	$q$. We shall omit also $q$  as it can be included in $\xi$.
	\item[6.)] \emph{Transformation of Eq.(\ref{b5})}\\
	Let us substitute Eqs.(\ref{b6}) - (\ref{b10}) in Eq.(\ref{b5}). The result is a sum of exponential 
	functions and each exponential function is multiplied by a coefficient. Any of these coefficients is a 
	relationship containing  parameters of the solution and parameters of the solved equation. 
	All of these relationships contain more than one term. Thus we don't need to perform a balance procedure.
	\item[7.)] \emph{Obtaining and solving the system of algebraic equations}\\
	The system of algebraic equations is obtained by setting of above-mentioned coefficients to $0$.
	Thus we obtain the following system of algebraic equations
	\begin{eqnarray}\label{b11}
	&& 57 k^5  - (9 \beta  - 4 \alpha) k^3 + 3\omega = 0 \nonumber \\
	&& 29 k^5 - (5 \beta  - 2 \alpha) k^3 +  \omega =0 \nonumber \\
	&& k^5 - \beta k^3 - \omega = 0
	\end{eqnarray}
	One nontrivial solution of this system is
	\begin{equation}\label{b12}
	k = \sqrt{\frac{3 \beta - \alpha}{15}}; \ \ \ \omega = \frac{(\alpha - 3 \beta)(\alpha + 12 \beta)
		\sqrt{15(3 \beta - \alpha)}}{3375}
	\end{equation}
	and the corresponding solution of Eq.(\ref{b2}) becomes
	\begin{equation}\label{b13}
	u(x,t) = \frac{2}{15} (\alpha - 3 \beta) \frac{\exp \left \{ \frac{4 \sqrt{15}}{375} \sqrt{3 \beta - \alpha} \left[ \frac{25}{4} x - \left(\beta - \frac{\alpha}{3} \right) \left(\beta + \frac{\alpha}{12} \right) t  \right]\right \} }{\left \{ 1 +  \exp \left \{ \frac{4 \sqrt{15}}{375} \sqrt{3 \beta - \alpha} \left[ \frac{25}{4} x - \left(\beta - \frac{\alpha}{3} \right) \left(\beta + \frac{\alpha}{12} \right) t  \right]\right \} \right \}^2}
	\end{equation}
	This solution is shown in Fig. 2.
	\begin{figure}[!htb]
		\centering
		\includegraphics[scale=0.8]{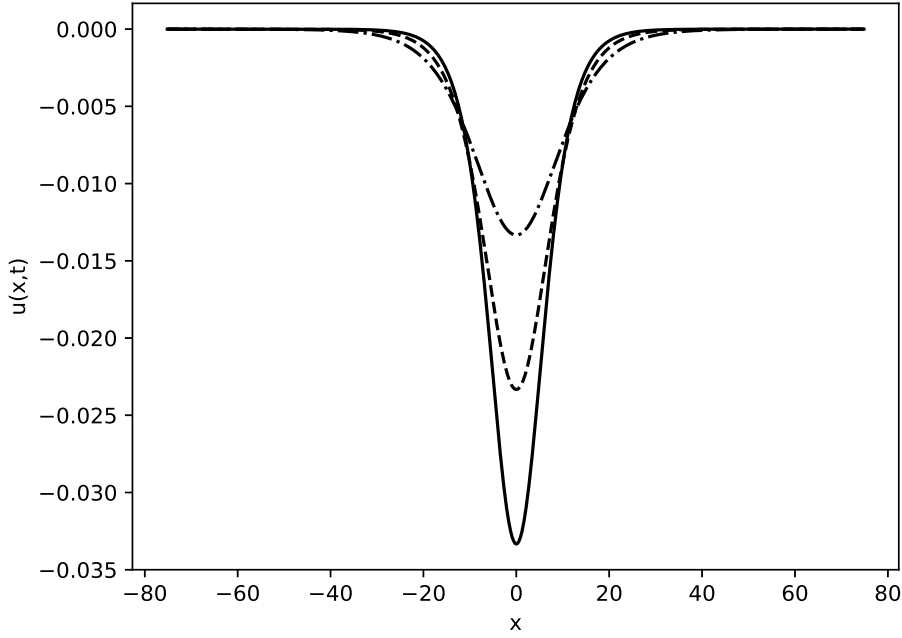}
		\caption{Several profiles of the solutions of Eq.(\ref{b13}). The values of parameters
			are as follows. Solid line: $\alpha=0.5$, $\beta = 0.5$; dashed line: $\alpha=0.5$, 
			$\beta = 0.4$; dot-dashed line: $\alpha=0.5$, $\beta = 0.3$ .}
	\end{figure}
\end{description}
\section{Obtaining solutions of Eq.(\ref{b1}) by use of the particular case $u(x,t)=F(x,t)$ of 
	transformation (\ref{m1})}
Let us now consider Eq.(\ref{b1}) and apply the methodology for the particular case $u(x,t)=F(x,t)$ of
transformation (\ref{m1}). The steps of the methodology are as follows
\begin{description}
	\item[1.)] \emph{The transformation}\\
	We shall  use a  particular case of transformation (\ref{m1}), i.e. $u(x,t)=F(x,t)$.
	\item[2.)] \emph{Relationship among $F(x,t)$ and the functions $f_k(x,t)$}\\
	The function F(x,t) will be searched as a function of another function $f(x,t)$ and the corresponding relationship 
	is particular case of the relationship (\ref{m2})
	\begin{equation}\label{c1}
	F(x,t) = \sum \limits_{i=0}^N \gamma_i f(x,t)^i
	\end{equation}
	where $\gamma_i$ are parameters.
	\item[3.)] \emph{Equation for the function $f(x,t)$}\\
	The function $f(x,t)$ will be assumed to be a traveling wave
	\begin{equation}\label{c2}
	f(x,t) = a(\xi); \ \ \ \xi = \mu x + \nu t
	\end{equation}
	\item[4.)] \emph{Representation of the function $a(\xi)$ by a function that is solution of a simplest equation}\\
	We shall not express further the function $a(\xi)$ through another function $v(\xi)$ and
	instead of this we shall assume that $a(\xi)$ is a solution of a simplest equation of the class
	(\ref{i1}). 
	\item[5.)] \emph{The simplest equations}\\
	Below we shall use the following two simplest equations:
	\begin{equation}\label{c3}
	\frac{da}{d\xi}  = \sum \limits_{j=0}^{p} d_j a^j,
	\end{equation}
	and
	\begin{equation}\label{c4}
	\left (\frac{d a}{d\xi} \right)^2 = \sum \limits_{j=0}^{p} d_j a^j.
	\end{equation}
	First we shall use Eq.(\ref{c3}) as simplest equation. Then we shall return to step 5.) and we shall
	use Eq.(\ref{c4}) as a simplest equation.
	\item[6.)] \emph{Transformation of Eq.(\ref{b1})}\\
	The substitution of Eqs. (\ref{c1}) and (\ref{c3}) in Eq.(\ref{b1})
	leads to a polynomial of $a(\xi)$ that contains the following maximum powers of the terms of Eq.(\ref{b1}) : $N+p-1$; $N+3(p-1)$; $Nm+n+5(p-1)$; $Nl+N+p-1$. 
	In order to obtain the system of nonlinear algebraic
	equations we have to write balance equations for these powers, i.e. in this case we have to balance 
	the largest powers:  
	$Nm+n+5(p-1)$ and $Nl+N+p-1$. This leads us to the balance equation
	\begin{equation}\label{c5}
	N(l-m) = 4(p-1)
	\end{equation}
	We note that $l$, $m$, $p$, $N$ have to be integers or $0$. We have $p>1$ and $l>m$. Then from Eq.(\ref{c5})
	\begin{equation}\label{c6}
	N = 4 \frac{p-1}{l-m}
	\end{equation}
	which means that  the equations of the class Eq.(\ref{b1}) may have solutions of the kind
	\begin{equation}\label{c7}
	u(x,t) = \sum \limits_{i=0}^{4 \frac{p-1}{l-m}} \gamma_i a(\xi)^i,
	\end{equation}
	where $\xi = \mu x + \nu t$ and $a(\xi)$ is a solution of the simplest equation
	\begin{equation}\label{c8}
	\frac{da}{d\xi} = d_0 + \dots + d_p a^p.
	\end{equation}
	We note that $\frac{p-1}{l-m}$ must be an integer. The solution of Eq.(\ref{b2}) is particular case of the 
	the solution of the above class of equations when $l=1$, $m=0$, $\alpha_0=0$, $\alpha_1 = 2 \alpha$,
	$\alpha_2 = 90$, $\alpha_3 = \dots = 0$, $\gamma=-1$. In this case 
	$N=4(p-1)$.
	\item[7.)] \emph{Systems of nonlinear algebraic equations and their solutions}\\
	Let us discuss the case $l=1$. Then $m=0$. Let in addition $p=2$. Thus $N=4$. We shall solve the equation
	\begin{equation}\label{c8}
	u_t + \left( \alpha_0 + \alpha_1 u \right) u_x + \beta u_{xxx} + \gamma u_{xxxxx}=0
	\end{equation}
	where the solution $u(x,t) = u(\xi)$, $\xi = \mu x + \nu t$ will be searched in the form
	\begin{equation}\label{c9}
	u(\xi) = \gamma_0 + \gamma_1 a(\xi) + \gamma_2 a(\xi)^2 + \gamma_3 a(\xi)^3 + \gamma_4 a(\xi)^4
	\end{equation}
	where $a(\xi)$ is solution of the simplest equation
	\begin{equation}\label{c10}
	\frac{da}{a \xi} = d_0 + d_1 a(\xi) + d_2 a(\xi)^2
	\end{equation}
	Above $\mu$, $\nu$, $\gamma_{0,1,2,3,4}$ and $d_{0,1,2}$ are parameters.
	We note that Eq.(\ref{c10}) is the Riccati equation. Below we shall use the following solution
	of this equation
	\begin{equation}\label{sol_riccati}
	a(\xi) = - \frac{d_1}{2d_2} - \frac{\theta}{2 d_2} \tanh \Bigg[\frac{\theta (\xi + \xi_0)}{2} \Bigg]
	\end{equation}
	where $\xi_0$ is constant of integration and $\theta^2 = d_1^2 - 4 d_0d_2 > 0$.
	\par
	The substitution of Eqs. (\ref{c9}) and (\ref{c10}) in Eq.(\ref{c8}) leads to a system of 10
	nonlinear algebraic equations. We shall not write this system here as it is quite long. The system can 
	be solved by means of a computer algebra software. One nontrivial solution is
	\begin{eqnarray}\label{c11}
	d_0 &=& \frac{\beta + 13 d_1^2 \gamma \mu^2}{52 d_2 \gamma \mu^2}; \ \ \gamma_4 = -1680 \frac{d_2^4 \gamma \mu^4}{\alpha_1}
	\nonumber \\
	\gamma_3 &=& -3360 \frac{\mu^4 d_2^3 \gamma d_1}{\alpha_1}; \ \ \gamma_2 = -\frac{840}{13} \frac{\mu^2 d_2^2 (\beta + 
		39 d_1^2 \gamma \mu^2)}{\alpha_1} \nonumber \\
	\gamma_1&=& -\frac{840}{13} \frac{\mu^2 d_1 d_2 (\beta + 13 d_1^2 \gamma \mu^2)}{\alpha_1} \nonumber \\
	\gamma_0&=& -\frac{17745 d_1^4 \gamma^2 \mu^5 + \gamma (2730 \beta d_1^2 \mu^3 + 169 \alpha_0 \mu + 169 \nu)  + 
		69 \beta^2 \mu}{169 \mu \alpha_1 \gamma} \nonumber \\
	\end{eqnarray}
	On the basis of Eqs.(\ref{c11}) and (\ref{sol_riccati}) we obtain, e.g.,
	\begin{eqnarray}\label{c12}
	\mu^2 &>& \frac{\beta}{13(\gamma - 1)}; \ \ \theta^2 = d_1^2 \Big(1- \frac{\beta + 13 \mu^2}{13 \gamma \mu^2} \Big) ;
	\nonumber \\
	a(\xi) &=&  - \frac{d_1}{2d_2} \Bigg \{ 1+ \sqrt{ \Big(1- \frac{\beta + 13 \mu^2}{13 \gamma \mu^2} \Big)} \tanh 
	\Bigg[\frac{d_1 \sqrt{ \Big(1- \frac{\beta + 13 \mu^2}{13 \gamma \mu^2} \Big)} (\xi + \xi_0)}{2} \Bigg] \Bigg \}
	\nonumber \\
	\end{eqnarray}
	Then the solution of Eq.(\ref{c9}) is 
	\begin{eqnarray}\label{c13}
	u(x,t) = u(\xi) = -\frac{17745 d_1^4 \gamma^2 \mu^5 + \gamma (2730 \beta d_1^2 \mu^3 + 169 \alpha_0 \mu + 169 \nu)  + 
		69 \beta^2 \mu}{169 \mu \alpha_1 \gamma} + \nonumber \\
	\frac{420}{13} \frac{\mu^2 d_1^2  (\beta + 13 d_1^2 \gamma \mu^2)}{\alpha_1}\Bigg \{ 1+ \sqrt{ \Big(1- \frac{\beta + 13 \mu^2}{13 \gamma \mu^2} \Big)} \tanh 
	\Bigg[\frac{d_1 \sqrt{ \Big(1- \frac{\beta + 13 \mu^2}{13 \gamma \mu^2} \Big)} (\xi + \xi_0)}{2} \Bigg] \Bigg \} -
	\nonumber \\
	\frac{210}{13} \frac{\mu^2 d_1^2 (\beta + 39 d_1^2 \gamma \mu^2)}{\alpha_1} \Bigg \{ 1+ \sqrt{ \Big(1- \frac{\beta + 
			13 \mu^2}{13 \gamma \mu^2} \Big)} \tanh 
	\Bigg[\frac{d_1 \sqrt{ \Big(1- \frac{\beta + 13 \mu^2}{13 \gamma \mu^2} \Big)} (\xi + \xi_0)}{2} \Bigg] \Bigg \}^2 +
	\nonumber \\
	420 \frac{\mu^4 d_1^4 \gamma}{\alpha_1}\Bigg \{ 1+ \sqrt{ \Big(1- \frac{\beta + 13 \mu^2}{13 \gamma \mu^2} \Big)} \tanh 
	\Bigg[\frac{d_1 \sqrt{ \Big(1- \frac{\beta + 13 \mu^2}{13 \gamma \mu^2} \Big)} (\xi + \xi_0)}{2} \Bigg] \Bigg \}^3 -
	\nonumber \\
	-105 \frac{d_1^4 \gamma \mu^4}{\alpha_1} \Bigg \{ 1+ \sqrt{ \Big(1- \frac{\beta + 13 \mu^2}{13 \gamma \mu^2} \Big)} \tanh 
	\Bigg[\frac{d_1 \sqrt{ \Big(1- \frac{\beta + 13 \mu^2}{13 \gamma \mu^2} \Big)} (\xi + \xi_0)}{2} \Bigg] \Bigg \}^4
	\nonumber \\
	\end{eqnarray}
	This solution is shown in Fig.3.
	\begin{figure}[!htb]
		\centering
		\includegraphics[scale=0.8]{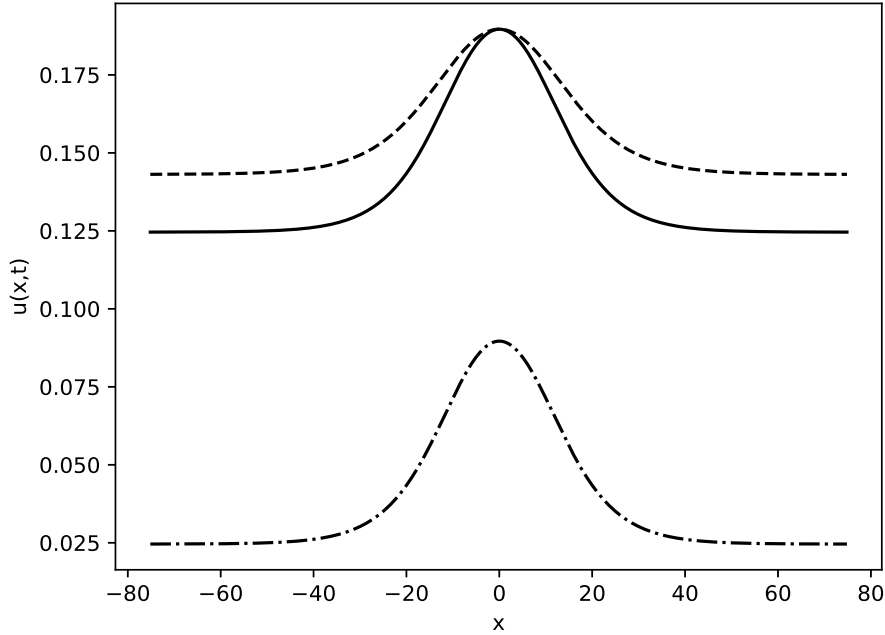}
		\caption{Several profiles of the solutions of Eq.(\ref{c13}). The values of parameters
			are: $\gamma=2$; $\beta=0.7$;  $\alpha_1=5$, $\nu=0.15$, $\mu = 0.1+(\frac{\beta}{13(\gamma-1)})^{1/2}$. Solid line: $d_1=0.8$, $\alpha_0 = -1.5$; dashed line:
			$d_1=0.7$, $\alpha_0=-1.5$; dot-dashed line: $d_1=$0.8, $\alpha_0=-1$.}
	\end{figure}
	\par
	We can continue to search for other solutions of Eq.(\ref{c8}). For an example if we set $p=3$ then $N=8$.
	This means that the simplest equation will be
	\begin{equation}\label{c14}
	\frac{da}{a \xi} = d_0 + d_1 a(\xi) + d_2 a(\xi)^2 + d_3 a(\xi)^3
	\end{equation}
	and the solution of Eq.(\ref{c8}) will be of the kind
	\begin{eqnarray}\label{c15}
	u(\xi) &=& \gamma_0 + \gamma_1 a(\xi) + \gamma_2 a(\xi)^2 + \gamma_3 a(\xi)^3 + \gamma_4 a(\xi)^4 +
	\gamma_5 a(\xi)^5 + \nonumber \\
	&& \gamma_6 a(\xi)^6 + \gamma_7 a(\xi)^7 + \gamma_8 a(\xi)^8
	\end{eqnarray}
	We note that Eq.(\ref{c14}) is a particular case of the Abel equation of the first kind.
	We let the obtaining of this solution to the interested reader and
	let us now return to the Step 5.) and use Eq. (\ref{c4}) as a simplest equation. The next step
	is
	\item [6.)] \emph{Transformation of Eq.(\ref{b1})}\\
	The use of Eq.(\ref{c4}) as simplest equation will change the balance equation.
	Now we have to balance the following powers: $N-1$, $Nl+N-1$, $N+p-3$, $Nm + N+2p-5$.
	We have to consider the cases $m=0$ and $m>0$. For the case $m=0$ when $p=1$ then $N+p-3 > Nm+N+2p-5$.
	The balance could be $N+p-3 = N-2 = Nl + N-1$. But this leads to $Nl=-1$ and such a balance is impossible. In other words there is no balance for the case $m=0$, $p=0$. For the case $m=0$ and 
	$p \ge 1$ $N+p-3 \le Nm+N+2p-5$ and the balance will be (we keep $m$ in the balance equation
	despite $m=0$ because this balance will be valid also for other cases)
	\begin{equation}\label{c16}
	N = \frac{2(p-2)}{l-m}
	\end{equation}
	This balance is different with respect to the balance given by Eq.(\ref{c6}). Let now $m \ge 1$.
	We have again to balance the terms $Nl+N-1$ and $Nm + N+2p-5$ that leads again to the balance 
	equation (\ref{c16}).
	\par
	All above means that the equations of the class (\ref{b1}) may have solutions of the kind
	\begin{equation}\label{c17}
	u(x,t) = u(\xi) = \sum \limits_{i=0}^{\frac{2(p-2)}{l-m}} \gamma_i a(\xi)^i,
	\end{equation}
	where $\xi = \mu x + \nu t$ and
	\begin{equation}\label{c18}
	\left( \frac{da}{d\xi}\right)^2 = d_0 + \dots+ d_p a^p.
	\end{equation}
	\item[7.)] \emph{Systems of nonlinear algebraic equations and their solutions}\\
	Let us now solve two nonlinear algebraic systems and obtain some exact solutions on the basis 
	of the simplest equation (\ref{c18}). First of all we shall consider the case $l=1$, $m=0$. In this case 
	the balance equation is $N=2(p-2)$ and the smallest possible value of $p$ is $p=3$. Then $N=2$. Thus the simplest equation becomes
	\begin{equation}\label{c19}
	\left( \frac{da}{d\xi}\right)^2 = d_0 + d_1 a + d_2 a^2 + d_3 a^3,
	\end{equation}
	and the solution of Eq.(\ref{b1}) will be given by
	\begin{equation}\label{c20}
	u(\xi) = \gamma_0 + \gamma_1 a(\xi) + \gamma_2 a(\xi)^2.
	\end{equation}
	The general solution of Eq.(\ref{c19}) is given by the special function $V$ discussed in \cite{vdv15}.
	Below we shall use the particular case of Eq.(\ref{c19}) where $d_2=0$, $d_3 = 4$, $d_0 = - g_3$,
	$d_1 = - g_2$. This  particular case of Eq.(\ref{c19}) is the differential equation for the elliptic function
	of Weierstrass that we shall denote as $\wp(\xi;g_1,g_2,g_3)$.
	\par 
	The substitution of Eqs(\ref{c20}) and the equation for the elliptic function of Weierstrass that is 
	particular case of Eq. (\ref{c19}) transforms Eq.(\ref{b1}) to a polynomial of $a(\xi)$. We set to 0 
	the coefficients of this polynomial and obtain the following system of 4 nonlinear algebraic
	equations
	\begin{eqnarray}\label{c21}
	&&1680 \gamma  \mu^4 +  \alpha_1 \gamma_2 = 0 \nonumber \\
	&&120 \gamma \gamma_1 \mu^4 + 20 \beta \gamma_2 \mu^2 +  \alpha_1 \gamma_1 \gamma_2  = 0 \nonumber \\
	&&-336 \mu^5 \gamma_2 g_2 \gamma + 12 \mu^3 \beta \gamma_1 + \mu [(2 \alpha_1 \gamma_0 + 2 \alpha_0) \gamma_2 + \alpha_1 \gamma_1^2] + 2 \gamma_2 \nu = 0 \nonumber \\
	&&(-18 g_2 \gamma \gamma_1 - 120 g_3 \gamma \gamma_2) \mu^5 - 3 \beta g_2 \gamma_2 \mu^3 + \gamma_1 (\alpha_1 \gamma_0 + \alpha_0) \mu + \gamma_1 \nu = 0 \nonumber \\
	\end{eqnarray}
	One non-trivial solution of the system (\ref{c21}) is
	\begin{eqnarray}\label{c22}
	g_2 &=& \frac{ 31 \beta^3 - 4745520 g_3 \gamma^3 \mu^6}{42588 \mu^4 \gamma^2 \beta}; \ \ 
	\gamma_2 = -1680 \frac{\gamma \mu^4}{\alpha_1}; \ \  
	\gamma_1 = -280 \frac{\beta \mu^2}{13 \alpha_1} \nonumber \\ 
	\gamma_0 &=& \frac{31 \beta^3 \mu - 169 \beta \gamma (\alpha_0 \mu + \nu)  - 3163680 g_3 \gamma^3 \mu^7}{169 \mu \beta \alpha_1 \gamma}
	\end{eqnarray}
	This solution leads to the following solution of Eq.(\ref{b1}) (note that $l=1$ and $m=0$)
	\begin{eqnarray}\label{c23}
	u(x,t) &=& u(\xi) = \frac{31 \beta^3 \mu - 169 \beta \gamma (\alpha_0 \mu + \nu)  - 3163680 g_3 \gamma^3 \mu^7}{169 \mu \beta \alpha_1 \gamma} -\nonumber \\
	&& 280 \frac{\beta \mu^2}{13 \alpha_1} \wp \Bigg(\xi;g_1,\frac{ 31 \beta^3 - 4745520 g_3 \gamma^3 \mu^6}{42588 \mu^4 \gamma^2 \beta},g_3\Bigg) - \nonumber \\
	&& -1680 \frac{\gamma \mu^4}{\alpha_1}\wp \Bigg(\xi;g_1,\frac{ 31 \beta^3 - 4745520 g_3 \gamma^3 \mu^6}{42588 \mu^4 \gamma^2 \beta},g_3\Bigg)^2
	\end{eqnarray}
	\par 
	Let us now consider again the case $l=1$, $m=0$ but now we set $p=4$. In this case $N=4$.
	Thus the simplest equation becomes
	\begin{equation}\label{c24}
	\left( \frac{da}{d\xi}\right)^2 = d_0 + d_1 a + d_2 a^2 + d_3 a^3 + d_4 a^4,
	\end{equation}
	and the solution of Eq.(\ref{b1}) will be given by
	\begin{equation}\label{c25}
	u(\xi) = \gamma_0 + \gamma_1 a(\xi) + \gamma_2 a(\xi)^2 + \gamma_3 a(\xi)^3 + \gamma_4 a(\xi)^4.
	\end{equation}
	The general solution of Eq.(\ref{c24}) is given by the special function $V$ discussed in \cite{vdv15}.
	Below we shall use the particular case of Eq.(\ref{c24}) where $d_1=0$, $d_3 = 0$. This  particular 
	case of Eq.(\ref{c24}) is the differential equation for the elliptic functions of Jacobi.
	\par 
	The substitution of Eqs. (\ref{c24}) and (\ref{c25}) in Eq.(\ref{b1}) transforms it to a polynomial
	of $a(\xi)$. We set to 0 
	the coefficients of this polynomial and obtain a system of 8 nonlinear algebraic
	equations. As the algebraic system is relatively long we shall not write it here. One nontrivial solution
	of this algebraic system is
	\begin{eqnarray}\label{c26}
	\gamma_0 &=& \frac{1}{43940 \alpha_1 d_2 \gamma^2 \mu^3 - 1183 \alpha_1 \beta \gamma \mu}\bigg[ -4921280 d_2^3 \gamma^3 \mu^7 - 94640 d_2 \bigg( d_2 \beta \mu^3 + \nonumber \\
	&& \frac{13}{28} \mu \alpha_0 + \frac{13}{28} \nu \bigg) \mu^2 \gamma^2 + 11180 \beta \bigg( d_2 \beta \mu^3 + \frac{91}{860} \mu \alpha_0 + \frac{91}{860} \nu \bigg) \gamma - 217 \beta^3 \mu \bigg] \nonumber \\
	\gamma_1 &=& 0; \ \ \gamma_2 = -\frac{280}{13} \frac{\mu^2 d_4 (52 d_2 \gamma \mu^2 + \beta)}{\alpha_1};
	\ \ \gamma_3 = 0; \ \ \gamma_4 = -1680 \frac{d_4^2 \gamma \mu^4}{\alpha_1} \nonumber \\
	d_0&=& \frac{1406080 d_2^3 \gamma^3 \mu^6 - 56784 \beta d_2^2 \gamma^2 \mu^4 + 31 \beta^3}{6327360 \gamma^2 (d_2 \gamma \mu^2 - \frac{7}{260} \beta) d_4 \mu^4}
	\end{eqnarray}
	Thus the simplest equation becomes
	\begin{equation}\label{c27}
	\left( \frac{da}{d\xi}\right)^2 = \frac{1406080 d_2^3 \gamma^3 \mu^6 - 56784 \beta d_2^2 \gamma^2 \mu^4 + 31 \beta^3}{6327360 \gamma^2 (d_2 \gamma \mu^2 - \frac{7}{260} \beta) d_4 \mu^4} + d_2 a^2  + d_4 a^4
	\end{equation}
	For some selected values of the parameters Eq.(\ref{c27}) has the Jacobi elliptic functions as solutions.
	Let us consider an  example.
	Let 
	\begin{eqnarray*}
		k^2 = 1 - \frac{1406080 d_2^3 \gamma^3 \mu^6 - 56784 \beta d_2^2 \gamma^2 \mu^4 + 31 \beta^3}{6327360 \gamma^2 (d_2 \gamma \mu^2 - \frac{7}{260} \beta) d_4 \mu^4},
	\end{eqnarray*}
	and $d_2$ and $d_4$ are solutions of the system
	\begin{eqnarray*}
		d_2 = 1 - 2  \frac{1406080 d_2^3 \gamma^3 \mu^6 - 56784 \beta d_2^2 \gamma^2 \mu^4 + 31 \beta^3}{6327360 \gamma^2 (d_2 \gamma \mu^2 - \frac{7}{260} \beta) d_4 \mu^4}; \nonumber \\
		d_4 = - 1 + \frac{1406080 d_2^3 \gamma^3 \mu^6 - 56784 \beta d_2^2 \gamma^2 \mu^4 + 31 \beta^3}{6327360 \gamma^2 (d_2 \gamma \mu^2 - \frac{7}{260} \beta) d_4 \mu^4}
	\end{eqnarray*}
	Then the Jacobi elliptic function ${\rm cn}(\xi;k)$ is solution of Eq.(\ref{c27}). The solution of Eq.(\ref{b1}) becomes
	\begin{eqnarray}\label{c28}
	u(\xi) &=& \frac{1}{43940 \alpha_1 d_2 \gamma^2 \mu^3 - 1183 \alpha_1 \beta \gamma \mu}\bigg[ -4921280 d_2^3 \gamma^3 \mu^7 - 94640 d_2 \bigg( d_2 \beta \mu^3 + \nonumber \\
	&& \frac{13}{28} \mu \alpha_0 + \frac{13}{28} \nu \bigg) \mu^2 \gamma^2 + 11180 \beta \bigg( d_2 \beta \mu^3 + \frac{91}{860} \mu \alpha_0 + \frac{91}{860} \nu \bigg) \gamma - 217 \beta^3 \mu \bigg]  -
	\nonumber \\
	&& \frac{280}{13} \frac{\mu^2 d_4 (52 d_2 \gamma \mu^2 + \beta)}{\alpha_1} {\rm cn}(\xi;k)^2  -1680 \frac{d_4^2 \gamma \mu^4}{\alpha_1}  {\rm cn}(\xi;k)^4.
	\end{eqnarray}
	If $k =1$ then ${\rm cn}(\xi;1) = \frac{1}{\cosh(\xi)}$. This happens when $\mu = \pm \frac{1}{2} \left( \frac{\beta}{13 \gamma}\right)^{1/2}$. In addition $d_2=1$ and $d_4=-1$. Then the solution (\ref{c28}) can be expressed by elementary functions
	\begin{eqnarray}\label{c29}
	u(\xi) &=&   \frac{1}{43940 \alpha_1  \gamma^2 \mu^3 - 1183 \alpha_1 \beta \gamma \mu}\bigg[ -4921280  \gamma^3 \mu^7 - 94640  \bigg( \beta \mu^3 + \nonumber \\
	&& \frac{13}{28} \mu \alpha_0 + \frac{13}{28} \nu \bigg) \mu^2 \gamma^2 + 11180 \beta \bigg(  \beta \mu^3 + \frac{91}{860} \mu \alpha_0 + \frac{91}{860} \nu \bigg) \gamma - 217 \beta^3 \mu \bigg] 
	+ \nonumber \\
	&& \frac{280}{13} \frac{\mu^2  (52  \gamma \mu^2 + \beta)}{\alpha_1 \cosh^2(\xi)}  -1680 \frac{ \gamma \mu^4}{\alpha_1 \cosh^4(\xi)}  
	\end{eqnarray}
	where $\xi = \pm \frac{1}{2} \left( \frac{\beta}{13 \gamma}\right)^{1/2} x + \nu t$.
\end{description}
The profile of (\ref{c29}) is shown in Fig.4. for several values of the parameters of the solution.
\begin{figure}[!htb]
	\centering
	\includegraphics[scale=0.8]{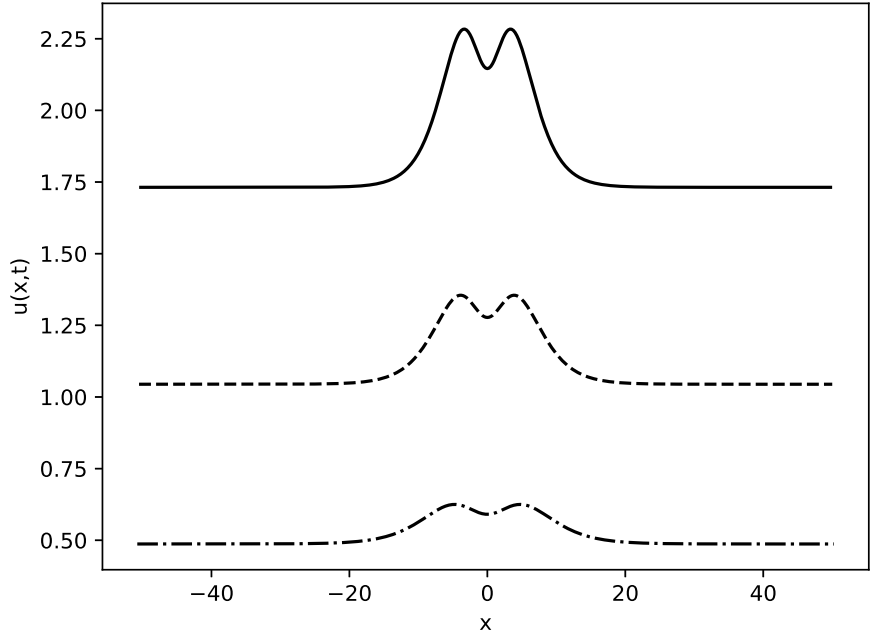}
	\caption{Several profiles of the solutions of Eq.(\ref{c29}). The values of parameters
		are: $\alpha_1=4$; $\alpha_0=0.1$;  $\gamma=2.0$, $\nu=0.15$. Solid line: $\beta = 4$; dashed line:
		$\beta=3$; dot-dashed line: $\beta=2$.}
\end{figure}
\section{Another example connected to simplest equation with fractional powers and more than one balance equation}
\subsection{Solution for the case of integer values of powers $a$ and $b$ in Eq.(\ref{ex1})}
Let us now consider the equation
\begin{equation}\label{ex1}
u_t = p(u^a)_x + q(u^b)_{xx}
\end{equation}
which is a kind of  a nonlinear reaction-diffusion equation \cite{gilding} and has application in
the  modeling of flow motion a in porous medium. First we shall assume that $p$, $q$, $a$, and 
$b$ are integers. The application of the SEsM is as follows
\begin{description}
	\item[1.)] \emph{The transformation}\\
	We shall use a particular case of transformation (\ref{m1}), i.e., $u(x,t)=F(x,t)$.
	\item[2.)] \emph{Relationship among $F(x,t)$ and functions $f_k(x,t)$} \\
	We shall use a single function $f(x,t)$ and the following particular case of Eq.(\ref{m2})
	\begin{equation}\label{ey1}
	u(x,t) = \sum \limits_{i=0}^N \gamma_i f(x,y)^i.
	\end{equation}
	\item[3.)] \emph{Expression of the function $f(x,t)$ by another functions that are connected to solutions of simplest equations} \\
	We shall use the most simple expression containing just one function that is a traveling wave: 
	$f(x,t) = e(\xi)$;  $\xi = \alpha x + \beta t$
	\item[4.)] \emph{Expression of $e(\xi)$ by function that is  a solution of a simplest equation}\\
	We shall use again the most simple relationship $e(\xi) = c(\xi)$
	\item[5.)] \emph{The simplest equation} \\
	Here we shall use the following simplest equation
	\begin{equation}\label{ey2}
	\frac{dc}{d\xi} = \sum \limits_{j=0}^r d_j c^j,
	\end{equation}
	where $r$ and $d_j$, $j=0,\dots$ are parameters.
	\item[6.)] \emph{Transformation of Eq.(\ref{ex1})} \\
	The application of the balance procedure to Eq.(\ref{ex1}) leads to the balance equation
	\begin{equation}\label{ba1}
	N = \frac{r-1}{a-b}
	\end{equation}
	Let us consider the case $r=3$. Then $a=b+2$ and $N=1$ or $a=b+1$ and $N=2$. Let us consider the 
	last case and set $a=3$ and $b=2$. Thus we shall search a solution of 
	\begin{equation}\label{ex1z}
	u_t = p(u^3)_x + q(u^2)_{xx}
	\end{equation}
	in the form
	\begin{equation}\label{s1}
	u(x,t) = u(\xi) = \gamma_0 + \gamma_1 c(\xi) + \gamma_2 c(\xi)^2
	\end{equation}
	where $c(\xi)$ is solution of the simplest equation
	\begin{equation}\label{s2}
	\frac{dc}{d\xi} = d_0 + d_1 c(\xi) + d_2 c(\xi)^2 + d_3 c(\xi)^3
	\end{equation}
	\item[7.)] \emph{The system of algebraic equations and its solution}\\
	The substitution of Eqs.(\ref{s1}) and (\ref{s2}) transforms Eq.(\ref{ex1z}) to polynomial
	of $c(\xi)$. After setting the coefficients of this polynomial to $0$ we obtain a system of 9
	nonlinear algebraic equations. We shall not write this system here as it is quite long. One nontrivial solution of the system of algebraic equations is
	\begin{eqnarray}\label{s3}
	d_0 &=& \left[ \frac{\beta p (-\alpha \beta p)^{1/2}}{64 \alpha^5 q^3 d_3}\right]^{1/2}; \ \ 
	d_1 = -\left(-\frac{\beta p}{16 \alpha^3 q^2} \right)^{1/2}; \gamma_0 = 0; \nonumber \\
	d_2 &=& \frac{3}{2} \left( \frac{d_3}{\alpha^2 q} \right)^{1/2} \left( - \alpha \beta p \right)^{1/4};
	\ \ \gamma_2 = - \frac{4 \alpha q d_3}{p}; \ \ \gamma_1= - (q d_3)^{1/2} \left( - \frac{\alpha \beta}{p^3} \right)^{1/4}\nonumber \\
	\end{eqnarray}
	The simplest equation becomes
	\begin{eqnarray}\label{s4}
	\frac{dc}{d\xi} &=& \left[ \frac{\beta p (-\alpha \beta p)^{1/2}}{64 \alpha^5 q^3 d_3}\right]^{1/2}
	-\left(-\frac{\beta p}{16 \alpha^3 q^2} \right)^{1/2} c(\xi) + \nonumber \\
	&& \frac{3}{2} \left( \frac{d_3}{\alpha^2 q} \right)^{1/2} \left( - \alpha \beta p \right)^{1/4} c(\xi)^2 + d_3 c(\xi)^3
	\end{eqnarray}
	Eq.(\ref{s4}) is a particular case of the differential equation of Abel of the first kind.
	This solution can be expressed by elementary functions for the particular case when 
	$d_0 = d_2 \frac{d_1 - \frac{2d_2^2}{9 d_3}}{3 d_3}$. As this relationship is fulfilled here 
	then the  solution of the Abel equation becomes
	\begin{eqnarray}\label{s6}
	c(\xi) = \frac{\exp \left[ \left( d_1 - \frac{d_2^2}{3 d_3} \right) \xi \right]}{\left \{ 
		C - \frac{d_3}{\left( d_1 - \frac{d_2^2}{d_3} \right)}  \exp \left [ 2 \left( d_1 - \frac{d_2}{3d_3} \right) \xi \right]   \right \}^{1/2}} - \frac{d_2}{3d_3},
	\end{eqnarray}
	where $C$ is a constant of integration. The solution of Eq.(\ref{ex1z}) then becomes
	\begin{eqnarray}\label{s7}
	&&u(x,t) = u(\xi) = \nonumber \\
	&&- (q d_3)^{1/2} \left( - \frac{\alpha \beta}{p^3} \right)^{1/4} \left \{\frac{\exp \left[ \left( d_1 - \frac{d_2^2}{3 d_3} \right) \xi \right]}{\left \{ 
		C - \frac{d_3}{\left( d_1 - \frac{d_2^2}{d_3} \right)}  \exp \left [ 2 \left( d_1 - \frac{d_2}{3d_3} \right) \xi \right]   \right \}^{1/2}} - 
	\frac{d_2}{3d_3} \right \} - \nonumber\\
	&&\frac{4 \alpha q d_3}{p} \left \{ \frac{\exp \left[ \left( d_1 - \frac{d_2^2}{3 d_3} \right) \xi \right]}{\left \{ 
		C - \frac{d_3}{\left( d_1 - \frac{d_2^2}{d_3} \right)}  \exp \left [ 2 \left( d_1 - \frac{d_2}{3d_3} \right) \xi \right]   \right \}^{1/2}} - \frac{d_2}{3d_3}\right \}^2
	\end{eqnarray}
	where $d_{0,1,2,3}$ are given by Eqs.(\ref{s3}).
\end{description}
\subsection{Solution for a case of fractional values of powers $a$ and $b$ in Eq.(\ref{ex1})}
Below we shall consider traveling-wave solutions 
\begin{math}
u(x,t) = u( \xi ) = u(\alpha x + \beta t), 
\end{math} constructed on the basis of the simplest equation 
\begin{equation}\label{se}
f_{\xi} = n \left[ f^{(n-1)/n} - f^{(n+1)/n} \right],
\end{equation}
where $n$ is \emph{an appropriate positive real number}.
The solution of this equation is $f( \xi ) = \tanh^{n}(\xi)$.
$n$ must be such real number that $\tanh^{n}(\xi)$ exists for $\xi \in (-\infty,
+ \infty)$ ($n=1/3$ is an appropriate value for $n$ and $n=1/2$ is not an
appropriate value for $n$).
\par 
Let us now prove a theorem before proceeding with application of the discussed version of the
modified method of simplest equation.
\begin{theorem}
	Let {$\cal{P}$} be a polynomial of the function $u(x,t)$ and its derivatives. $u(x, t)$ can be differentiated $k$ times, where $k$ is
	the highest order of derivative participating in {$\cal{P}$}. 
	Let us consider the nonlinear partial differential equation:
	\begin{equation}\label{nde}
	{\cal{P}}=0
	\end{equation}
	We search for solutions of this equation of the kind 
	$u( \xi) = \gamma + \delta f( \xi ), \xi = \alpha x + \beta t$. $\gamma $ and
	$\delta$  are parameters and $f( \xi )$ is a solution of the 
	simplest equation $f_{\xi} = n \left[ f^{(n-1)/n} - f^{(n+1)/n} \right]$ 
	where $n$ is an appropriate real positive number.
	The substitution of this solution in Eq.(\ref{nde}) leads to a relationship R of the kind
	\begin{equation}\label{r}
	R = \sum_{\sigma_i} \kappa_{\sigma_i} f^{\sigma_i}
	\end{equation}
	where $\sigma_i$ are some real numbers  and $\kappa_{\sigma_i}$ are algebraic
	relationships containing the parameters of the solved equation and the parameters of the 
	solution. Any nontrivial solution of the system of (nonlinear) algebraic equations
	$\kappa_{\sigma_i}=0$, $i=1,\dots$ leads to a solution of the solved nonlinear partial differential equation. 
\end{theorem}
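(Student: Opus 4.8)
\emph{Overview and Step 1.} The plan is to show that, after the substitution $u(\xi)=\gamma+\delta f(\xi)$ with $\xi=\alpha x+\beta t$, every monomial of $\mathcal P$ collapses to a finite real-exponent ``power series'' in $f$, and then to exploit the linear independence of distinct powers of the non-constant function $f=\tanh^{n}\xi$. Since $\xi$ is linear in $x$ and $t$, any partial derivative $\partial_x^{a}\partial_t^{b}u$ equals $\alpha^{a}\beta^{b}\delta\,f^{(a+b)}(\xi)$, so it suffices to control the ordinary derivatives $f^{(k)}$. I will prove by induction on $k\ge 1$ that
\begin{equation}\label{eq:deriv}
f^{(k)}(\xi)=\sum_{\substack{-k\le j\le k\\ j\equiv k\,(2)}} c_{k,j}\,f^{\,1+j/n}
\end{equation}
for suitable constants $c_{k,j}$: the base case $k=1$ is exactly the simplest equation $f_{\xi}=n\bigl(f^{1-1/n}-f^{1+1/n}\bigr)$, and the inductive step follows by differentiating \eqref{eq:deriv}, applying the chain rule, and substituting $f_\xi$ once more, so that each term $f^{\,1+j/n}$ produces two terms with exponents $1+(j\pm1)/n$ --- the parity of $j$ flips and the range widens by one. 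This computation is mechanical.

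\emph{Step 2: propagating through $\mathcal P$.} Because $u=\gamma+\delta f$, the binomial theorem expands any power $u^{m}$ as a polynomial in $f$ with integer exponents $0,\dots,m$, while by Step 1 each derivative factor occurring in a monomial of $\mathcal P$ is a constant (built from $\alpha,\beta,\delta$) times a finite sum of powers $f^{\,1+j/n}$. A product of finitely many finite sums of powers of $f$ is again a finite sum of powers of $f$, the exponents being sums of the exponents of the chosen factors; these are real numbers of the form $p+q/n$ with $p$ a nonnegative integer and $q$ an integer. Summing over the monomials of $\mathcal P$ and collecting terms of equal exponent, the substitution produces exactly $R=\sum_{\sigma_i}\kappa_{\sigma_i}f^{\sigma_i}$ with distinct real exponents $\sigma_i$ and with each $\kappa_{\sigma_i}$ an algebraic expression in the parameters of \eqref{nde} and in $\alpha,\beta,\gamma,\delta$, as asserted.

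\emph{Step 3: conclusion, and the delicate point.} The function $f=\tanh^{n}\xi$ is non-constant, and for an appropriate $n$ the finitely many distinct powers $f^{\sigma_1},f^{\sigma_2},\dots$ are linearly independent as functions on the interval where $f$ is defined --- this can be read off, for instance, from their pairwise distinct asymptotics as $f\to 0^{+}$ together with a Vandermonde-type argument, or from real-analyticity of $f$ on that interval. Hence $\kappa_{\sigma_i}=0$ for all $i$ forces $R\equiv 0$; but $R$ is precisely $\mathcal P$ evaluated on $u=\gamma+\delta f$, so $\mathcal P=0$ holds and $u(\xi)=\gamma+\delta f(\xi)$ solves \eqref{nde} (the word ``nontrivial'' excludes the degenerate choices such as $\delta=0$ or $\alpha\beta=0$). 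The main obstacle is exactly this last step: making the linear independence rigorous and identifying which positive reals $n$ are ``appropriate'' --- so that $\tanh^{n}\xi$ is real-valued on the relevant domain and no two exponents $p+q/n$ collide accidentally, both of which the earlier discussion implicitly requires. The algebra of Steps 1--2 is entirely routine.
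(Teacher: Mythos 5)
Your Steps 1 and 2 are essentially the paper's own proof: your inductive formula $f^{(k)}=\sum_{j}c_{k,j}f^{\,1+j/n}$ is exactly the paper's relation $f^{(k)}_\xi=\sum_{m=0}^{k}g_m(n)\,f^{(n-k+2m)/n}$ (set $j=2m-k$), proved by the same differentiate-and-substitute induction, and the propagation through the monomials of $\cal{P}$ via $u=\gamma+\delta f$ is the same collection argument the paper states more briefly. One correction, though: the linear-independence discussion you single out in Step 3 as ``the delicate point'' and ``the main obstacle'' is not needed for the statement being proved, and your ``Hence'' there is a non sequitur. The theorem asserts only the sufficiency direction: if the parameters satisfy $\kappa_{\sigma_i}=0$ for all $i$, then $R=\sum_i\kappa_{\sigma_i}f^{\sigma_i}$ vanishes identically --- trivially, term by term, regardless of whether the powers $f^{\sigma_i}$ are linearly independent or whether two exponents accidentally coincide --- and since $R$ is $\cal{P}$ evaluated on the ansatz, $u=\gamma+\delta f$ solves the equation. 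Linear independence of the powers of $f$ would be required only for the converse claim (that $R\equiv 0$ forces every $\kappa_{\sigma_i}=0$, i.e.\ that the algebraic system is necessary as well as sufficient), which neither the theorem nor the paper's proof asserts; accordingly the paper omits any such argument, and your proof would be complete, and identical in substance to the paper's, with Step 3 reduced to its final sentence.
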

\begin{proof}
	Let us denote the $k$-th derivative of $f(\xi)$ as $f^{(k)}_\xi$. First we shall
	prove that if $f(\xi)$ obeys Eq.(\ref{se}) then 
	\begin{equation}\label{rel}
	f_\xi^{(k)} = \sum \limits_{m=0}^k
	g_m(n) f^{(n-k+2m)/n}.
	\end{equation}
	where $g_m(n)$ is a polynomial of $n$.
	In order to proof this we mention that $f_\xi$ satisfies
	this relationship as it can be seen from Eq.(\ref{se}). The second, third, and
	fourth derivatives of $f(\xi)$
	\begin{eqnarray}\label{ders}
	f_{\xi \xi} &=& n \bigg[(n-1) f^{(n-2)/n} -2nf + (n+1)f^{(n+2)/2}\bigg]
	\nonumber \\
	f_{\xi \xi \xi} &=& n \bigg[ (n^2-3n+2)f^{(n-3)/n} + (-3n^2 + 3n -2)F^{(n-1)/n} +
	\nonumber \\
	&& (3n^2+3n+2)f^{(n+1)/n} - (n^2+3n+2)f^{(n+3)/3}\bigg] \nonumber \\
	f_{\xi \xi \xi \xi} &=& n \bigg[ -(n^3-6n^2+11n-6)f^{(n-4)/n} -
	4(n^3-3n^2+4n-2)f^{(n-2)/n} + \nonumber \\
	&& (6n^3+10n)f - 4(n^3+3n^2+4n+2)f^{(n+2)/n} + \nonumber \\
	&& (n^3+6n^2+11n+6)f^{(n+4)/n}\bigg]
	\end{eqnarray}
	are of the kind (\ref{rel}). Let us assume that the $k$-th derivative of
	$f(\xi)$ is of kind (\ref{rel}). We shall show that the $(k+1)$-st derivative 
	of $f(\xi)$ is of kind (\ref{rel}). From Eq.(\ref{rel})  we obtain ($q=k+1$)
	\begin{equation}\label{rel1}
	f_\xi^{(k+1)} = \sum \limits_{m=0}^{q-1}g_m(n)(n-q+2m+1) f^{(n-q+2m)/n} -
	\sum \limits_{m=0}^{q-1} g_m(n) (n-q+2m+1) f^{(n-q+2m+2)/n}
	\end{equation}
	The term $\sum \limits_{m=0}^{q-1}g_m(n)(n-q+2m+1) f^{(n-q+2m)/n}$ is of
	kind (\ref{rel}). Let us split the term 
	$\sum \limits_{m=0}^{q-1} g_m(n) (n-q+2m+1) f^{(n-q+2m+2)/n}$ in two parts:
	$\sum \limits_{m=0}^{q-2} g_m(n) (n-q+2m+1) f^{(n-q+2m+2)/n}$ and 
	$\sum \limits_{m=q-1}^{q-1} g_m(n) (n-q+2m+1) f^{(n-q+2m+2)/n}$. The first
	term above is of kind (\ref{rel}). Thus for $f^{(k+1)}_\xi$ we have up to now
	all terms of the sum (\ref{rel}) except the last one (the term corresponding to 
	$m=q$ that must contain
	$f^{(n+q)/n}$). But let consider the the second term above. This is exactly
	the missing term as it is equal to $g_{q-1}(n)
	(n+q-1)f^{(n+q)/n}$ which is term of the same kind as the $k$-th term of the sum
	in Eq.(\ref{rel}). Thus the derivative $f_\xi^{(k+1)}$ is also of the kind
	(\ref{rel}) and our proposition is proven by the method of mathematical
	induction.
	\par 
	We have shown that $f_\xi^{(k)}$ is of the kind of the terms in (\ref{r}). Then it follows that
	any of the terms of the solved equation (\ref{nde}) is of the same kind or of the kind $f^0$. 
	Thus {\cal{P}} is is reduced to
	relationship of kind (\ref{r}). Then any nontrivial solution of the system of nonlinear 
	algebraic equations $\kappa_{\sigma_i}=0$, $i=1,\dots$ leads to a solution of the solved 
	nonlinear partial differential equation ${\cal{P}}=0$.
\end{proof}
\par 
Let us now apply the SEsM based on 
the simplest equation (\ref{se}) to the equation (\ref{ex1}). The steps of the method are as follows.
\begin{figure}[!htb]
	\centering
	\includegraphics[scale=0.8]{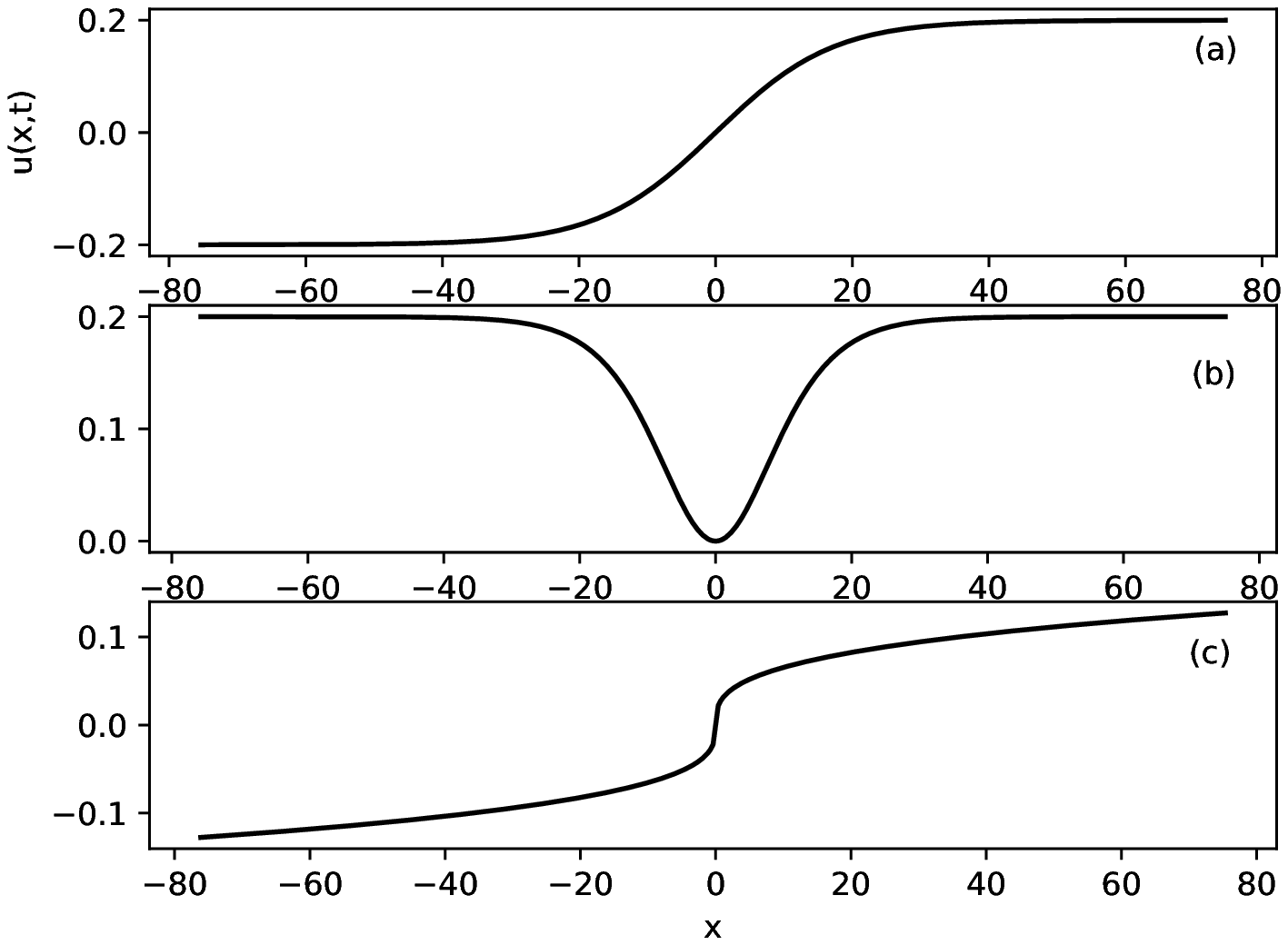}
	\caption{Several profiles of the solutions of Eq.(\ref{ex1}). The values of parameters
		are: $\delta=0.2$, $p=0.7$, $q=1.2$, $t=0.0$. Figure (a): The solution (\ref{kink1}). 
		Figure (b): The solution (\ref{kink2}). Figure (c): The solution (\ref{kink3}).}
\end{figure}

\begin{description}
	\item[1.)] \emph{The transformation}\\
	We shall  use a particular case of transformation (\ref{m1}) of $u(x,t)$, i.e. $u(x,t)=F(x,t)$
	\item[2.)] \emph{Relationship among $F(x,t)$  and the functions functions $f_k(x,t)$} \\
	We shall use a single function $f(x,t)$ and the following particular case of Eq.(\ref{m2})
	\begin{equation}\label{ez1}
	F(x,t) = \delta f(x,t),
	\end{equation}
	where $\delta$ is a  parameter.
	\item[3.)] \emph{Representation of $f(x,t)$ by a solution of simplest equation}\\
	We shall use the most simple expression containing just one function that is a traveling wave: 
	$f(x,t) = e(\xi)$;  $\xi = \alpha x + \beta t$
	\item[4.)] \emph{Expression of $e(\xi)$ by function that is solutions of a simplest equation}\\
	We shall use again the most simple relationship $e(\xi) = c(\xi)$.
	\item[5.)] \emph{The simplest equation}\\
	The simplest equation for $c(\xi)$ is
	\begin{equation}\label{sec}
	c_{\xi} = n \left[ c^{(n-1)/n} - c^{(n+1)/n} \right].
	\end{equation}
	where $n$ can be a positive real number.
	\item[6.)] \emph{Balance equations}\\
	The substitution of the equations from the steps 2.) - 5.) transforms Eq.(\ref{ex1}) to a
	polynomial of $c(\xi)$. Note that the conditions of the above theorem are satisfied.
	The application of the balance leads to
	the relationships $a=1+2/n$, $b= 1 +1/n$ (two balance equations). 
	\item[7.)] \emph{The system of algebraic equations and its solutions}\\
	The use of the balance equations leads  to  the following system of nonlinear algebraic equations
	\begin{eqnarray}\label{nae1}
	\beta  - (n+1) q \alpha^2 \delta^{1/n} = 0 \nonumber \\
	p  \delta^{1/n} - q \alpha (n+1)  = 0 \nonumber \\
	2 q \alpha^2 (n+1) \delta^{1/n} - \alpha p (n+2) \delta^{2/n} + 2n[q
	\alpha^2(n+1) \delta^{1/n} - \beta/2] = 0 \nonumber \\
	\end{eqnarray}
	The solution of Eqs(\ref{nae1}) is
	\begin{equation}\label{sl1}
	\alpha =  \frac{p}{(n+1)q} \delta^{1/n}; \ \ \ 
	\beta = \frac{p^2}{(n+1)q} \delta^{3/n} 
	\end{equation}
	and then the solution of the equation (\ref{ex1}) ($a=1+2/n$, $b=1+1/n$)
	\begin{equation}\label{ex1a}
	u_t = p\Bigg(1+\frac{2}{n}\Bigg)u^{2/n}u_x + q \frac{1}{n}\Bigg( 1+ 
	\frac{1}{n} \Bigg) u^{(1/n)-1}u_x^2 + q \Bigg(1+\frac{1}{n} \Bigg) u^{1/n}
	u_{xx}
	\end{equation}
	is 
	\begin{equation}\label{sl1}
	u(x,t) = \delta \tanh^n \left[ \frac{p}{(n+1)q} \delta^{1/n} \Bigg( x + 
	p \delta^{2/n} t \Bigg) \right]
	\end{equation}
	\par
	Let us consider several particular cases. For $n=1$: the equation
	\begin{equation}\label{ks1}
	u_t = 3pu^2u_x + 2qu_x^2 + 2quu_{xx},
	\end{equation}
	has the solution
	\begin{equation}\label{kink1}
	u(x,t) = \delta \tanh \left[ \frac{p}{2q} \delta \Bigg( x + 
	p \delta^{2} t \Bigg) \right].
	\end{equation}
	For $n=2$: the equation
	\begin{equation}\label{ks2}
	u_t = 2puu_x + \frac{3q}{4}u^{-1/2}u_x^2 + \frac{3q}{2}u^{1/2} u_{xx},
	\end{equation}
	has the solution
	\begin{equation}\label{kink2}
	u(x,t) = \delta \tanh^2 \left[ \frac{p}{3q} \delta^{1/2} \Bigg( x + 
	p \delta^4 t \Bigg) \right].
	\end{equation}
	For $n=1/3$: the equation
	\begin{equation}\label{ks3}
	u_t=7p u^6 u_x + 12 q u^2u_x^2 + 4 q u^3u_{xx},
	\end{equation}
	has the solution
	\begin{equation}\label{kink3}
	u(x,t) = \delta \tanh^{1/3} \left[ \frac{3p}{4q} \delta^{3} \Bigg( x + 
	p \delta^{6} t \Bigg) \right].
	\end{equation}
	\begin{figure}[!htb]
		\centering
		\includegraphics[scale=.8]{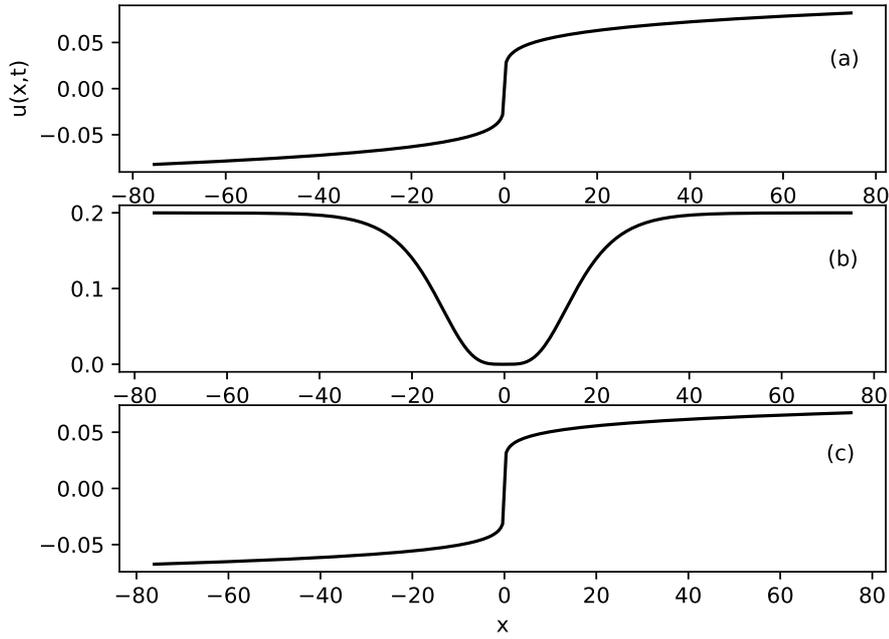}
		\caption{Several profiles of the solutions of Eq.(\ref{ex1}). The values of parameters
			are: $\delta=0.2$, $p=0.7$, $q=1.2$, $t=0.0$. Figure (a): The solution (\ref{kink4}). $m=5$.
			Figure (b): The solution (\ref{kink4}). $m=1/4$. Figure (c): The solution (\ref{kink4}). $m=7$.}
	\end{figure}
	For $n=1/m$ (m is an odd integer): the equation
	\begin{equation}\label{ks4}
	p(1+2m)u^{2m}u_x + qm(m+1)u^{m-1}u_x^2 + q(m+1)u^mu_{xx},
	\end{equation}
	has the solution
	\begin{equation}\label{kink4}
	u(x,t) = \delta \tanh^{1/m} \left[ \frac{mp}{(m+1)q} \delta^{m} \Bigg( x + 
	p \delta^{2m} t \Bigg) \right].
	\end{equation}
	\par 
	Several profiles of solutions obtained above are presented in Fig. 5 and Fig. 6.
\end{description}
\section{Concluding remarks}
We discuss in this article a methodology for obtaining exact analytical solutions of nonlinear partial differential equations. The methodology is called Simple equations Method (SEsM) and it is based on the possibility of use of more than one simple
equation. We add a possibility for a transformation connected to  the searched solution. In such a way the possibility for use 
of a Painleve expansion or other transformations  is presented in the methodology. This possibility 
in combination with the possibility of use of more than one simple equation adds
the capability for obtaining multisolitons by the SEsM. In addition we consider
the relationship (\ref{m2}) that is used to connect the solution of the solved 
nonlinear partial differential equation to solutions of more simple differential 
equations. The relationship (\ref{m2}) contains as particular case the relationship
used by Hirota \cite{hirota}. In addition the relationship (\ref{m2}) contains as
particular case the relationship used in the previous versions of the methodology based on 1 simple equation (and called  Modified 
method of simplest equation) for connection of the searched solution of the solved
nonlinear partial differential equation to the solution of the used simplest equation.
The discussed version of the methodology allows for the use of more than one 
balance equation too. It is demonstrated that the SEsM can indeed lead to multisoliton solutions. In addition 
the discussed version of the methodology preserves its capability for obtaining
exact particular solutions of non-integrable nonlinear partial differential equations.
Several examples of application of the discussed version of the methodology are 
presented and it is demonstrated that the balance procedure can lead to more than one 
balance equation. Special attention is given to the application of the methodology
on the basis of the simplest equation $f_\xi =n[f^{(n-1)/n} - f^{(n+1)/n}]$  where
$n$ can be a positive real number. Several solutions of partial differential equations containing
fractional powers are obtained on the basis of this simplest equation.


\end{document}